\numberwithin{equation}{section}
\numberwithin{figure}{section}
\newtheorem{defn}{Definition}[section]
\newtheorem{Thm}{Theorem}
\newcommand{\Figref}[1]{Fig.~\ref{#1}}
\newcommand{\C}{\mathcal{C}}
\newcommand{\K}{\mathcal{K}}
\newcommand{\Sectionref}[1]{Section~\ref{#1}}
\newcommand{\Eqref}[1]{Eq.~\eqref{#1}}
\newcommand{\Defref}[1]{Def.~\ref{#1}}
\newcommand{\Eqsref}[1]{Eqs.~\eqref{#1}}
\newcommand{\Thmref}[1]{Theorem~\ref{#1}}
\newcommand{\V}{\mathcal{V}}
\newcounter{mnotecount}[section]
\let\oldmarginpar\marginpar
\renewcommand\marginpar[1]{\-\oldmarginpar[\raggedleft\footnotesize #1]%
	{\raggedright\footnotesize #1}}
\title{Bianchi I ``asymptotically Kasner'' solutions of the Einstein scalar field equations}
\author{J. Ritchie\footnote{Email: jritchie@maths.otago.ac.nz}}
\affil{Department of Mathematics and Statistics, University of Otago, New Zealand.}
\begin{document}
	
	\maketitle
	
	\begin{abstract}
		In this work we investigate the asymptotic behaviour of solutions to the Einstein equations with a minimally coupled scalar field. The primary focus of the present paper here establishing under what conditions a solution becomes ``asymptotically Kasner'' sufficiently close to the initial singularity. To address this question we restrict our attention to Bianchi I space-times. By restricting our attention to a strictly monotonic scalar field we are able to provide necessary conditions on a potential so that the resulting solution is asymptotically Kasner. Moreover, we provide both explicit and numerical examples of asymptotically Kasner space-times.
	\end{abstract}
	
	\section{Introduction}
	The \emph{Kasner solutions} are a one-parameter family of spatially homogeneous cosmological solutions of the Einstein Field Equations (EFEs), without matter. These solutions are anisotropic, as each spatial axis is allowed to grow (or decay) at a different rate in time (with respect to a particular `natural' coordinate system) \cite{KasnerMetric}.

	These solutions play an important role in cosmology, in part due to the conjecture that spatially inhomogeneous solutions of the EFEs can be matched point-wise to a Kasner solution. In this setting spatial derivatives are believed to be negligible. Solutions of this nature are called \emph{asymptotically velocity term dominated (AVTD)}. However, an array of heuristic and numerical results have found that generic cosmological solutions are not AVTD in any gauge \cite{BGIMW,BIW,BM,BM2,BM3}. Instead, it is expected that generic solutions are oscillatory \cite{BIW,BM,BM2}. In this picture, it is conjectured that the solutions can be modelled point-wise by an anisotropic Kasner solution for some time interval (known as a \emph{Kasner epoch}), before jumping to a \emph{different} Kasner solution \cite{PhysRevD.69.063514}. This type of effect is known as ``mixmaster" behaviour and was described by Misner in \cite{MixMasterUniverse} and, independently, by Belinski-Khalatnikov and Lifshitz (BKL) in \cite{OscillatoryApproach:BKL} (see \cite{Heinzle_2009}, and the references therein, for an overview of mixmaster dynamics). The conjecture that generic cosmological solutions behave this way is known as the \emph{BKL conjecture}. The singular nature of the ``Big Bang" coupled to the (expected) oscillatory behaviour of the solutions makes studying this conjecture difficult, both analytically and numerically.

	In order to simplify the task of studying cosmological solutions near the Big Bang, it is common to couple the EFEs to a scalar field. In the spatially homogeneous setting this allows one to generalise the standard Kasner solutions to the \emph{Kasner scalar field solutions} (also commonly referred to as the `generalised Kasner solutions'). This is a two-parameter family of solutions, that are generically anisotropic. The only isotropic member of the Kasner scalar field solutions is the Friedman-Lema\^itre-Robertson-Walker universe (FLRW). Introducing a scalar field is advantageous as it has the effect of mollifying the expected oscillatory behaviour. In this setting there are reasons to believe that generic solutions of the Einstein scalar field equations \emph{are} AVTD. For example, in \cite{Andersson} Andersson and Rendall were able to show the existence of an infinite-dimensional family of solutions to the Einstein scalar field equations with AVTD asymptotics. These solutions are not limited to being close to a FLRW solution. In their work, the question of stability was not addressed. In the remarkable work by Rodanski and Speck \cite{RodnianskiSpeck:Linear,RodnianskiSpeck:NonLinear}, it was shown the AVTD behaviour was non-linearly stable in the sense that there is an open set around FLRW in which AVTD behaviour holds. The work by Rodanski and Speck is certainly impressive, but it does not identify the asymptotic degrees of freedom. This question was (partially) addressed by Beyer et al. for a linearised sub-system of the EFEs coupled to a scalar field \cite{LapseScalarField}.

	In each of the above works it is assumed that the scalar field potential is zero and as such it is unclear if their results still hold when a potential is included. This is not to say that it is uncommon to add a potential. On one hand, works such as \cite{PhysRevD.48.4662,PhysRevD.48.4669,Beyer_2013} consider evolutions \emph{away} from the initial singularity. In this setting, the potential is commonly used to address problems about inflation and graceful exit. On the other hand, works such as \cite{PhysRevD.61.023508,KasnerSolutions,Narita_2000,Soviet_Belinkskii,Berger_2004,Weaver_1999} consider the addition of a potential when evolving \emph{toward} the initial singularity. In \cite{PhysRevD.61.023508,KasnerSolutions} heuristic evidence is given that solutions are AVTD only if the potential decays appropriately. In works such as \cite{Narita_2000,Soviet_Belinkskii} it is noted that if the scalar field is coupled to the Maxwell equation then the mixmaster oscillations are restored. Moreover, \cite{PhysRevD.61.023508} claims that if the potential is of a particular exponential form, then it is also possible to restore the mixmaster oscillations.
	
	In this work we seek to answer the following questions: \emph{{Can a potential be introduced so that resulting solutions are asymptotically Kasner?}} If so, \emph{{what kind of potentials do not lead to asymptotically Kasner solutions?}} And \emph{{how would they differ from the standard Kasner scalar field solutions?}} Questions of this nature have been previously considered by Condeescu et al in \cite{KasnerSolutions}. In their work, Condeescu et al search for strictly monotone scalar field solutions corresponding to a four-parameter class of exponential potentials that are asymptotically Kasner. It should be emphasised that we only focus on whether not a solution is asymptotically Kasner. Analysing other properties of the solutions, such as stability, is beyond the scope of this work here.  
	
	The work we present here differs from \cite{KasnerSolutions} in four key ways: (1) We do not consider coupled scalar fields instead choosing to focus on the behaviour of a single \emph{scalar field}; (2) we do not a priori assume that the solutions (of the EFEs) are asymptotically Kasner. Instead we give a list of sufficient conditions for an \emph{arbitrary} potential, and prove that these conditions imply that solutions (of the EFEs) are asymptotically Kasner; (3) For a particular choice of the potential, we provide numerical examples; and (4) we do not only focus on solutions for which the scalar field is a strictly monotonic function. 
	
	The analytical results we present here focus on spatially homogenous solutions with a strictly monotonic scalar field solution. The results that we present are then numerically extended to space-times with a scalar field that is \emph{not} strictly monotonic. The assumption that the scalar field is a strictly monotonic function allows us to treat the potential as a given function of \emph{time}. We are not the first to treat the potential in this way; see, for example, \cite{GenSolArbPotential}.
	
	This paper is outlined as follows: We begin in \Sectionref{Preliminary_material} by first discussing the ADM equations (in CMC gauge and zero shift), as well as introducing the Kasner scalar field solutions. In \Sectionref{BianchiI} we restrict our attention to Bianchi I space-times, and discuss how treating the potential as a function of time can be beneficial. In \Sectionref{Section:Integral_formulas_and_the_isotropic_case} we provide evidence that the ADM equations in CMC gauge with zero shift will necessarily lead to a coordinate singularity if the scalar field is oscillatory and the potential is non-zero. In \Sectionref{Sec:Asymptotically_Kasner_solutions_in_Bianchi_I_cosmologies} we give necessary conditions that a scalar field potential must satisfy in order for the resulting Bianchi I solution to be asymptotically Kasner. In \Sectionref{Some_exact_solutions} we then give explicit examples of space-times that are and are not asymptotically Kasner. Finally, in \Sectionref{Numerical_examples} we provide numerical examples of asymptotically Kasner space-times, and in \Sectionref{Conclusions} we summarize our results.
	\section{Preliminary material}
	\label{Preliminary_material}
	\subsection[The EFEs in CMC gauge with zero shift]{The Einstein-scalar field equations in CMC gauge with zero shift}
	\label{The_Einstein-scalar_field_equations_in_CMC_gauge_with_zero_shift}
	We consider a globally hyperbolic, time-oriented oriented 4-dimensional smooth Lorentzian manifold $(M, g_{\alpha\beta})$ where $g_{\alpha\beta}$ is a smooth Lorentzian metric. Here, we study solutions of the EFEs in geometric units ($c=8\pi G=1$ for the speed of light $c$ and the gravitational constant $G$), 
	\begin{align}
	\prescript{(4)}{}{R}_{\mu\nu}-\frac{1}{2}\prescript{(4)}{}{R}g_{\mu\nu}=T_{\mu\nu},
	\label{Eq:EFEs}
	\end{align}
	where $\prescript{(4)}{}{R}_{\mu\nu},\prescript{(4)}{}{R}$ are the Ricci tensor and scalar (associated with $g_{\mu\nu}$), respectively, and $T_{\mu\nu}$ is the energy momentum tensor of the matter field. Here we consider a minimally coupled scalar field as our matter field. 
	\begin{align}
	T_{\mu\nu}=D_{\mu}\phi D_{\nu}\phi - \left( \frac{1}{2}D^{\sigma}\phi D_{\sigma}\phi + V(\phi) \right)g_{\mu\nu},
	\end{align} 
	where $D_\mu$ is the unique Levi-Civita connection associated with $g_{\mu\nu}$. The remaining freedom is the \emph{scalar field potential} $V(\phi)$. The equation of motion for the real-valued scalar field $\phi$ is 
	\begin{align}
	D^{\mu}D_{\mu}\phi=V^{\prime}(\phi),
	\label{Eq:ScalarFieldEq_4D}
	\end{align}  
	which generically follows from the divergence-free condition $D^\mu T_{\mu\nu}=0$.
	
	We now suppose that there exists a smooth function $t:M\rightarrow \mathbb{R}$ whose collection of level sets $\Sigma_t$ forms a foliation $\Sigma$ of $M$. This foliation yields a decomposition of $(M,g_{\alpha\beta})$ in the standard way. The unit co-normal of any $3$-surface $\Sigma_t\in\Sigma$ is 
	\begin{align}
	n_{\mu}=\alpha D_{\mu}t,
	\label{eq:def_Na}
	\end{align}
	where $\alpha>0$ is the \emph{lapse}. The induced first and second fundamental forms are therefore, respectively,
	\begin{align}
	\gamma_{\mu\nu}=g_{\mu\nu}+n_{\mu}n_{\nu},
	\label{Eq:Metriic_Decomp}
	\\
	K_{\mu\nu}=-\frac{1}{2}\mathcal{L}_{n}\gamma_{\mu\nu}.
	\end{align}
	The covariant derivative associated with $\gamma_{\mu\nu}$ is $\nabla_{\alpha}$. The tensor field
	\begin{align*}
	{\gamma^\mu}_{\nu}={\delta^\mu}_{\nu}+n^{\mu}n_{\nu},
	\end{align*}
	is the map that projects any tensor defined at any point  in $M$ orthogonally to a tensor that is tangent to some $\Sigma_t$. If each index of a tensor field defined on $M$ contracts to zero with $n_{\mu}$ or $n^{\mu}$, then we call that field \textit{spatial}. Given an arbitrary tensor field on $M$ we can create a spatial tensor field on $\Sigma_t$ by contracting each index with ${\gamma^\alpha}_\beta$. In fact, any tensor can be uniquely decomposed into its intrinsic and its orthogonal parts, e.g.
	\begin{align}
	\label{eq:Tdec}
	T_{\mu\nu}=\rho n_{\mu}n_{\nu}+n_{\mu}j_{\nu}+n_{\nu}j_{\mu}+S_{\mu\nu},
	\end{align}
	with
	\begin{align}
	\rho=n^{\nu}n^{\mu}T_{\mu\nu}=\frac{1}{2}\nu^2 + \frac{1}{2}\nabla_\sigma \phi \nabla^\sigma \phi + V(\phi),\quad j_{\mu}=-{\gamma^\sigma}_{\mu}n^{\nu}T_{\sigma\nu}=-\nu \nabla_\mu\phi,
	\end{align}
	and 
	\begin{align}
	S_{\mu\nu}={\gamma^\sigma}_{\mu}{\gamma^\iota}_{\nu}T_{\sigma\iota}=\nabla_{\mu}\phi\nabla_{\nu}\phi-\left( \frac{1}{2}\left(\nabla_\sigma \phi\nabla^\sigma\phi - \nu^2 \right) + V(\phi) \right)\gamma_{\mu\nu},
	\end{align}
	where we have defined 
	\begin{align}
	\nu=n^{\mu}D_{\mu}\phi.
	\end{align}
	The field $K_{\mu\nu}$ is symmetric and can be decomposed into its trace and trace-free parts (with respect to $\gamma_{\mu\nu}$) as follows
	\begin{align}
	K_{\mu\nu}=\chi_{\mu\nu}+\frac{1}{2}K \gamma_{\mu\nu}, \;\; \chi_{\mu\nu}\gamma^{\mu\nu}=0,
	\label{Eq:Decompose_Kdd_tr}
	\end{align}          
	where the relations 
	\begin{align}
	K=\gamma^{\mu\nu}K_{\mu\nu},\quad \chi_{\mu\nu}\gamma^{\mu\nu}=0,
	\end{align}
	hold and $\chi_{\mu\nu}$ is symmetric and $K$ is the \emph{mean curvature}. 
	
	Now pick an arbitrary vector field  $t^\mu$ such that 
	\begin{align}
	t^{\mu}D_{\mu}t =1.
	\end{align}
	According to \Eqref{eq:def_Na} there must exist a unique spatial vector field $\beta^{\mu}$, called the \textit{shift}, such that
	\begin{align}
	t^\mu = \alpha n^{\mu} + \beta^{\mu},
	\label{Eq:n_t}
	\end{align}
	where the quantities $\alpha$ and $\beta^\mu$ are gauge freedoms that correspond to the choice of coordinate system. For the remainder of this work we restrict our attention to the \emph{constant mean curvature} (CMC) with zero shift gauge. In particular, we set 
	\begin{align}
	K=-1/t,\quad \beta^{\mu}=0.
	\label{Eq:CMCGauge}
	\end{align}
	Notice that, as a consequence of \Eqref{Eq:CMCGauge}, the mean curvature $K$ is constant on each surface $\Sigma_t$.

	Given all this, one can decompose \Eqref{Eq:ScalarFieldEq_4D} into the following system of evolution equations 
	\begin{align}
	\partial_{t}\nu&=\alpha \Delta_{\gamma}\phi-\frac{1}{t}\alpha\nu + \gamma^{\mu\nu}\nabla_{\mu}\phi\nabla_{\nu}\alpha -\alpha V^{\prime}(\phi),
	\label{Eq:Evol_nu}
	\\
	\partial_{t}\phi&=\alpha \nu,
	\label{Eq:Evol_phi}
	\end{align}
	where $\Delta_\gamma=\gamma^{\mu\nu}\nabla_{\mu}\nabla_{\nu}$ is the Laplace-Beltrami operator associated with $\gamma_{\mu\nu}$. It is useful to note that the conjugate momentum of $\phi$ is $\pi_{\phi}=t\nu$.
	
	Similarly, from \Eqref{Eq:EFEs}, one obtains the following evolution equations for\footnote{The quantity ${\chi^\mu}_{\nu}$ is also known as the Weingarten map. We refer the interested reader to \cite{Ringstrom:SilentWaveEqs,Ringstrom:SilentGeomtry} and the references therein for more details.} ${\chi^\mu}_{\nu}$ and $\gamma_{\mu\nu}$
	\begin{align}
	\partial_{t}{\chi^\mu}_{\nu}=&-\nabla^{\mu}\nabla^{\nu}\alpha + \left( {R^{\mu}}_\nu + \left( \frac{1}{3t^2}-V(\phi)\right){\gamma^\mu}_\nu - \frac{1}{t}{\chi^\mu}_{\nu} - \nabla^\mu \phi \nabla_{\nu}\phi \right)\alpha
	\nonumber
	\\
	&-\frac{1}{3t^2}{\gamma^\mu}_{\nu},
	\label{Eq:Evol_A}
	\\
	\partial_{t}\gamma_{\mu\nu}=&-2\alpha\left( \chi_{\mu\nu}-\frac{1}{3t}\gamma_{\mu\nu} \right),
	\label{Eq:Evol_y}
	\end{align}
	where ${R^{\mu}}_\nu$ is the Ricci tensor associated with $\gamma_{\mu\nu}$. These are the \emph{ADM evolution equations}. \Eqref{Eq:EFEs} also gives rise to the \emph{constraint equations} 
	\begin{align}
	H:={R}-{\chi^\mu}_{\sigma}{\chi^\sigma}_{\mu}+\frac{2}{3}K^{2}-\nu^2 - \gamma^{\mu\nu}\nabla_{\mu}\phi\nabla_{\nu}\phi - 2 V(\phi)=0,
	\label{Eq:Constraint_H}
	\\
	M_{\mu}:=\nabla_{\sigma}{\chi^{\sigma}}_{\mu}+\nu\nabla_{\mu}\phi=0,
	\label{Eq:Constraint_M}
	\end{align}
	where $R={R^\mu}_\mu$ is the Ricci scalar associated with $\gamma_{\mu\nu}$. The fields $H$ and $M_\mu$ are the \emph{constraint violations}.
	Finally, fixing $K$ as in \Eqref{Eq:CMCGauge} gives rise to an elliptic equation for the lapse 
	\begin{align}
	\Delta_{\gamma}\alpha = \left( {R} + \frac{1}{t^2} - 3V(\phi) - \gamma^{\mu\nu}\nabla_{\mu}\phi\nabla_{\nu}\phi \right)\alpha -\frac{1}{t^2}.
	\label{Eq:LapseEq}
	\end{align}
	\newcommand{\CMCADMEqs}{\Eqsref{Eq:Evol_nu}--\eqref{Eq:LapseEq}}
	In the following we use abstract indices  $a,b,\ldots$ for $t$-dependent tensor fields on $\Sigma_t$. All indices $\mu,\nu,\ldots$ in the equations above could therefore be replaced by $a,b,\ldots$ (and at the same time each Lie-derivative along $t^\mu$ by the derivative with respect to parameter $t$).
	
	According to \cite{WellPosed:ADM}, it can be shown that given arbitrary smooth \emph{initial data} for $\gamma_{ab}$, ${\chi^{a}}_{b},\phi$ and $\nu$ (which are solutions of the constraints \Eqsref{Eq:Constraint_H} and \eqref{Eq:Constraint_M}) on an arbitrary $t=t_0$-leaf of the $(3+1)$-decomposition of $M$ the \emph{Cauchy problem} of \CMCADMEqs{} in both the \emph{increasing} and \emph{decreasing} $t$-directions is well-posed. We therefore have that \CMCADMEqs{} forms a non-linear elliptic-hyperbolic system. Note that this statement is unique to our gauge choice. In other gauges the ADM equations are generically only \emph{weakly hyperbolic} \cite{Alcubierre:Book}. 
	
	Let us now make a brief observation about \Eqref{Eq:LapseEq}. The Weingarten map ${\chi^a}_{b}$ is a trace-free tensor and as such we can, without loss of generality, set ${\chi^3}_3=-{\chi^1}_1-{\chi^2}_2$. Given this, it follows then that \Eqref{Eq:Evol_A} gives rise to \emph{two} evolution equations for ${\chi^1}_1$ (the first from the one-one component of \Eqref{Eq:Evol_A} and the second from the three-three component, with  ${\chi^3}_3=-{\chi^1}_1-{\chi^2}_2$). It is straightforward to show that these two equations are equivalent if and only if \Eqref{Eq:LapseEq} is satisfied. \Eqref{Eq:LapseEq} can therefore be thought of as an equation that `preserves' the trace-free property of the Weingarten map ${\chi^a}_{b}$.
	
	It should be emphasised here that for the remainder of this work, whenever we mention the Einstein scalar field equations we are exclusive referring to the specific formulation given by \CMCADMEqs{}.
	
	\subsection{Kasner solutions with a scalar field}
	\label{Sec:Kasner_solutions_with_a_scalar_field}

	The Kasner space-times, which can be generalised to include a scalar field \cite{RodnianskiSpeck:Linear,RodnianskiSpeck:NonLinear,KasnerSolutions}, are an example of spatially homogeneous solutions of \Eqsref{Eq:Evol_A}--\eqref{Eq:LapseEq}. These solutions play an important role in the present work and so it is prudent for us to briefly summarise their basic properties. We begin with the metric
	\begin{align}
	\gamma_{ab}=\text{diag}\left( t^{2p_1}, t^{2p_2}, t^{2p_3} \right),
	\end{align}
	which is expressed in terms of the standard Cartesian coordinates on $\Sigma_t$, where the Kasner exponents $p_1,p_2,p_3\in\mathbb{R}$ are constants which must satisfy the equations
	\begin{align}
	p_{1}+p_{2}+p_{3}=p_{1}^{2}+p_{2}^{2}+p_{3}^{2}+A^{2}=1,
	\label{Eq:KasnerRelations}
	\end{align}
	where $A$ is the \emph{``scalar field strength"} subject to the restriction $A\in[-\sqrt{2/3},\sqrt{2/3}]$. The scalar field solution is 
	\begin{align}
	\phi = A \ln\left(t\right)+B,
	\end{align}
	where $B$ is an integration constant that does not affect the dynamics of the scalar field. The lapse is 
	\begin{align}
	\alpha=1,
	\end{align}
	and the trace-free part of the extrinsic curvature is 
	\begin{align}
	{\chi^a}_{b}=-\frac{1}{t}\text{diag}\left( q_1,q_2,q_3 \right),\quad q_{i}=p_i - \frac{1}{3}.
	\end{align}
	Notice that this implies the formula
	\begin{align}
	{\chi^a}_{b}{\chi^b}_{a}=\frac{1}{t^2}\left(\frac{2}{3}-A^2\right).
	\end{align}
	Observe carefully that the requirement $A\in[-\sqrt{2/3},\sqrt{2/3}]$ implies that ${\chi^a}_{b}{\chi^b}_{a}\ge 0$. We further note that the $p_{i}$'s can be found as the eigenvalues of $t{K^{a}}_b$.
	
	We also find that the constants  $q_{i}$, $i=1,2,3$ must satisfy the constraints 
	\begin{align}
	q_{1}+q_{2}+q_{3}=0,\quad q_{1}^2+q_{2}^2 + q_{3}^2 = \frac{2}{3}-A^{2}. 
	\end{align}
	The special Kasner solution for which the space-time is isotropic, is described by setting
	\begin{align}
	p_{1}=p_{2}=p_{3}=\frac{1}{3},\quad A^{2}=\frac{2}{3},
	\end{align} 
	and corresponds to a FLRW space-time.
	\begin{figure}[t!]
		\centering
		\includegraphics[width=0.5\linewidth]{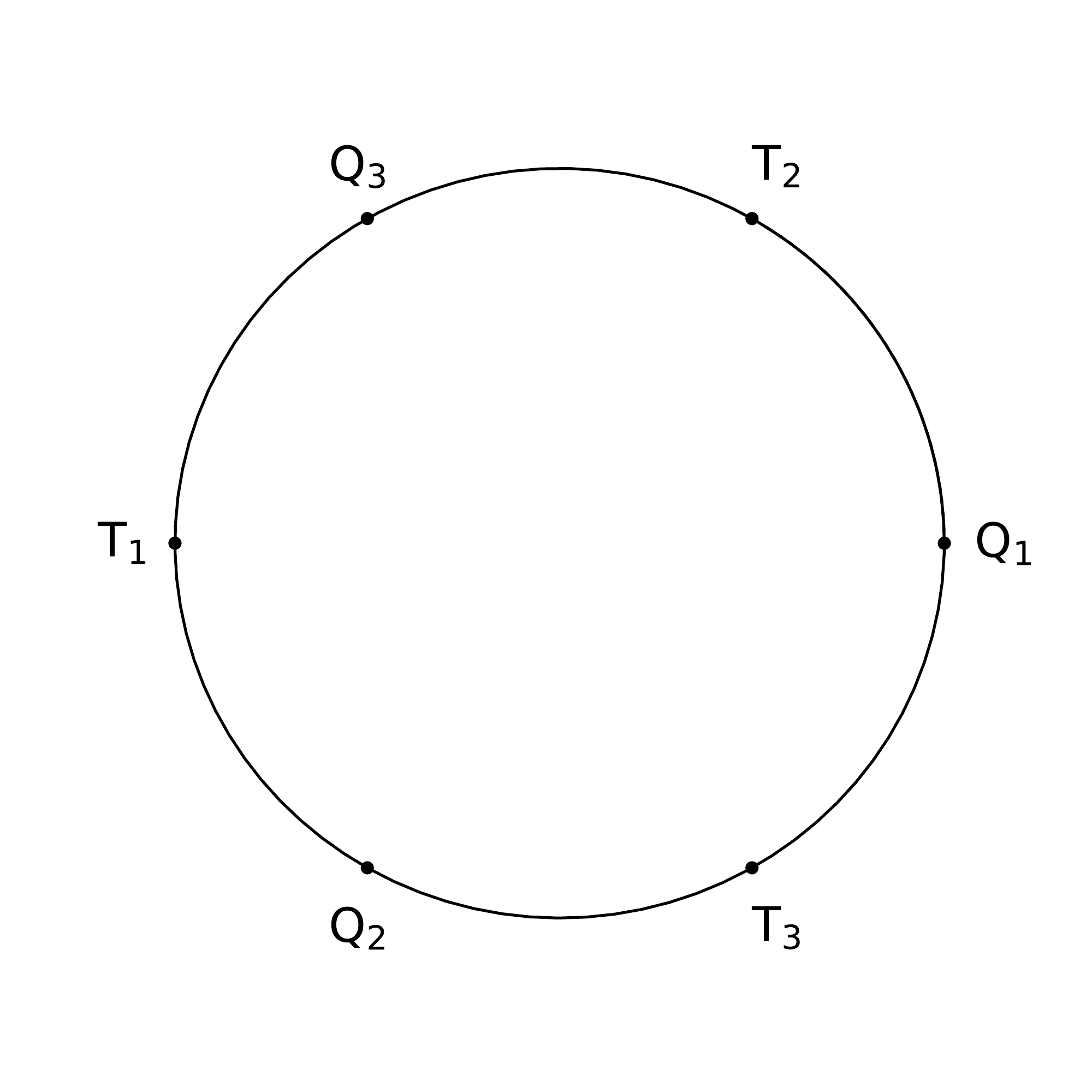}
		\caption{The Kasner circle, which shows the most vacuum solutions with $\theta=A=0$. The points labelled $T_i$ are defined as the points where $p_i=1$ for $i=1,2,3$. The midpoints between two consecutive $T_i$ and $T_k$ (for some cycle of $(ijk)=(123)$) points are labelled as $Q_j$.}
		\label{fig:kasnercircle}
	\end{figure}
	
	For later discussion it is useful to note that the set of all Kasner scalar field solutions can be expressed in terms of a three-dimensional space with coordinates $(\Sigma_+,\Sigma_-,\Sigma_0)$, which relate to the Kasner exponents as
	\begin{align}
	p_{1}=\frac{1}{3}\left( 1 - 2\Sigma_+ \right),\;\; p_{2}=\frac{1}{3}\left( 1+\Sigma_+ +\sqrt{3}\Sigma_- \right),\;\; p_{3}=\frac{1}{3}\left( 1+\Sigma_+ -\sqrt{3}\Sigma_- \right),\;\; A = \sqrt{\frac{2}{3}}\Sigma_0.
	\label{KasnerCoordinates}
	\end{align}
	In this representation the two Kasner relations reduce to a single equation:
	\begin{align}
	\Sigma_{+}^{2}+\Sigma_{-}^{2}+\Sigma_{0}^{2}=1.
	\end{align}
	It follows then that the set of Kasner scalar field solutions are represented by a unit-sphere in $\mathbb{R}^3$ and so it is useful to parametrise the coordinates as 
	\begin{align}
	\Sigma_{+}=\cos\left( \psi \right)\cos\left( \theta \right),\;\; \Sigma_{-}=\sin\left( \psi \right)\cos\left( \theta \right),\;\; \Sigma_{0}=\sin\left( \theta \right),
	\end{align}
	for angular coordinates $\psi\in\left[0,2\pi \right)$ and $\theta\in\left(-\pi/2,\pi/2 \right]$. We refer to this sphere as \emph{`the Kasner sphere'}. The Kasner sphere is the generalisation of the more famous Kasner-circle, which only shows the vacuum ($\theta=0$) solutions, to include the scalar field solutions. The Kasner circle is shown in \Figref{fig:kasnercircle}, with its most relevant points: The \emph{Taub points} $T_i$ and the \emph{locally rotationally symmetric points} $Q_i$. At a Taub point (represented as $T_{i}$ in \Figref{fig:kasnercircle}) one of the Kasner exponents is one, while the remaining two are identically zero. Locally rotationally symmetric points (represented as $Q_{i}$ in \Figref{fig:kasnercircle}) are the midpoints between two consecutive Taub points. At such a point none of the Kasner exponents are zero and two of them are equal. These points naturally divide the Kasner circle into six arcs. For any solution in one of these arcs there exists a unique isometry to one of the other arcs. It is therefore common to restrict ones attention to one of the arcs.   
	
	Notice that the exponents $p_{i}$ take their largest and smallest values when $\theta=0$. In particular we find that $p_{i}\in [ -1/3,1 ]$, and hence $q_{i}\in[ -2/3, 2/3 ]$. We shall always discuss Kasner scalar field solutions in terms of the angles $(\theta,\psi)$.
	
	\section{Bianchi I cosmologies with a strictly monotonic scalar field}
	\label{BianchiI}
	\subsection{The potential as a function of time in CMC gauge with zero shift}
	\label{Integral_formulas_for_Bianchi_I_space-times}
	The Bianchi cosmologies are a class of cosmological models that are spatially homogeneous but not necessarily isotropic. These space-times are characterised by the presence of three (spatial) Killing vectors. For the Bianchi I cosmologies, which are of particular interest in this section, the pairwise commutator of the Killing vectors is identically zero. We refer the interested reader to \cite{Wainwright:2005wss} for more in-depth discussion about Bianchi cosmologies.
	
	We say that the fields $(\gamma_{ab},{\chi^a}_b,\alpha,\phi,\nu,V(\phi))$ describe a Bianchi I cosmology if they are spatially homogeneous solutions of the Einstein scalar field equations \CMCADMEqs{} on some interval $I=\left( 0, T \right]$ for some initial time $T>0$. Assuming that the solutions are spatially homogeneous leads to a decoupling of the evolution equation for the fundamental forms $(\gamma_{ab},{\chi^a}_b)$ and the matter fields $(\nu,\phi)$. In particular we find that \Eqsref{Eq:Evol_A} and \eqref{Eq:Evol_y} become 
	\begin{align}
	\partial_{t}{\chi^a}_{b}&= - \frac{\alpha}{t}{\chi^a}_{b},
	\label{Eq:ADM_Homogenous_A}
	\\
	\partial_{t}\gamma_{ab}&=-2\alpha\left( \chi_{ab}-\frac{1}{3t}\gamma_{ab} \right).
	\label{Eq:ADM_Homogenous_y}
	\end{align}
	Similarly, \Eqsref{Eq:Evol_nu} and \eqref{Eq:Evol_phi} become 
	\begin{align}
	\partial_{t}\nu&=-\frac{1}{t}\alpha\nu -\alpha V^{\prime}(\phi),
	\label{Eq:ADM_Homogenous_nu}
	\\
	\partial_{t}\phi&=\alpha \nu.
	\label{Eq:ADM_Homogenous_phi}
	\end{align}
	In \Eqsref{Eq:ADM_Homogenous_A}--\eqref{Eq:ADM_Homogenous_phi} the lapse $\alpha$ (which is found as a solution of \Eqref{Eq:LapseEq}) is given \emph{algebra\-ically} as
	\begin{align}
	\alpha = \frac{1}{1-3t^2 V(\phi)}.
	\label{Eq:Lapse_Homogenous}
	\end{align}
	
	We point out that even though evolution equations \Eqsref{Eq:ADM_Homogenous_A}--\eqref{Eq:ADM_Homogenous_phi} have decoupled, the fields $(\gamma_{ab},{\chi^a}_b,\phi,\nu)$ are still intrinsically related via the Hamiltonian constraint \Eqref{Eq:Constraint_H}
	\begin{align}
	{\chi^a}_{b}{\chi^b}_{a} =  \frac{2}{3t^2} - \nu^2 - 2 V(\phi).
	\label{Eq:ADM_Homogenous_Hamiltonian}
	\end{align}
	
	One sees immediately from \Eqref{Eq:Lapse_Homogenous} that if there exists a $t_\star\in I$ such that $V(\phi(t_\star))=1/(3t_{\star}^2)=:V_\star$ then the lapse $\alpha$ is not defined. Moreover, we have that the lapse $\alpha$ is positive if and only if $V(\phi(t))<V_\star$ for all $t\in I$. The point $t=t_\star$ likely corresponds to a breakdown of our gauge choice (and is therefore a coordinate singularity). However, it could also correspond to a physical singularity. The only conclusive way to demonstrate that $t=t_\star$ is a coordinate singularity is to find a coordinate system such that, in the these new coordinates, the fields $(\gamma_{ab},{\chi^a}_b,\alpha,\phi,\nu,V(\phi))$ are well-defined at $t=t_\star$. Finding such a coordinate system can be difficult. In such an instance one may instead consider \emph{curvature invariants}. If a curvature invariant remains finite as $t\rightarrow t_\star$ then it implies (but does not prove) that $t=t_{\star}$ is a coordinate singularity. The most commonly considered curvature invariant is the \emph{Kretschmann scalar}. However, in the presence of matter one may instead consider the Ricci scalar or the Ricci tensor contracted with itself \cite{Ringstrom:SCC_T3Gowdy}. If any of the curvature invariants are not finite as $t\rightarrow t_\star$ then $t=t_\star$ is a physical singularity.   
	
	Suppose now that \emph{there is no} $t_\star \in I$ such that $V(\phi(t_\star))=V_\star$ (which we shall assume from now on, unless stated otherwise). Then, to solve \Eqsref{Eq:ADM_Homogenous_A} -- \eqref{Eq:Lapse_Homogenous} one typically specifies potential as a function of the \emph{scalar field} $\phi$, first. One therefore, as a matter of principle, does not a priori know how the function $V(t)=V(\phi(t))$ depends on \emph{time}. It is therefore not always possible to ensure that $V(\phi)\ne V_\star$ for all $t\in I$ \emph{before} the equations are solved. Although it should be noted that it \emph{is} possible for some potentials (for example $V(\phi)=-\phi^{2}<0$). If one instead knew the function $V(t)$, and not $V(\phi)$, then it would be possible to ensure that $V(t)\ne V_\star$ for all $t\in I$ \emph{before} \Eqsref{Eq:ADM_Homogenous_A} -- \eqref{Eq:Lapse_Homogenous} are solved. In such a setting \Eqsref{Eq:ADM_Homogenous_nu} and \Eqref{Eq:ADM_Homogenous_phi} ensure that the energy-momentum tensor is divergence free only if\footnote{If there is a $t_\star\in I$ such that $(\partial_{t}\phi)(t_{\star})=0$ \emph{and} $(\partial_{t}V)(t_{\star})=0$ then $D^\alpha T_{\alpha\beta}=0$. However, the implicit function theorem suggests that it may not be possible to calculate $V(\phi)$ at such a point.} $\partial_{t}\phi \ne 0$ for all $t\in I$. Of course, there is no way to enforce this condition when solving \Eqsref{Eq:ADM_Homogenous_A} -- \eqref{Eq:Lapse_Homogenous}. Nevertheless, if $\partial_{t}\phi\ne0$ for all $t\in I$ then the implicit function theorem ensures that it is possible to calculate $V(\phi)$, once $\phi(t)$ has been determined. If $\phi$ is such that $\partial_{t}\phi\ne0$ and $\nu\ne0$ for all $t\in I$ then we say that $\phi$ is \emph{strictly monotonic} on $I$. It is worth pointing out that $\partial_{t}\phi\ne 0$ is equivalent to $\nu\neq 0$ provided the lapse $\alpha$ is finite (and non-zero). Often we refer to the solutions $(\gamma_{ab},{\chi^a}_b,\alpha,\phi,\nu,V(\phi))$, with a strictly monotonic scalar field $\phi$ as a strictly monotonic solution. Treating the potential as a function of time is something of a trade off, as we can now a priori ensure that $V(t)\ne V_\star$ for all $t\in I$, but we cannot study solutions that do not have a strictly monotonic scalar field $\phi$. Nevertheless, this approach shall play a fundamental role in our analytical treatment of the potential.
	
	We emphasize that this approach is formally consistent only in the spatially homogeneous case. In particular we point out that it is, in general, not possible to pick the potential $V(\phi)$ as a function of the coordinates in the spatially inhomogeneous setting. Generically, if the potential is chosen as a function of the coordinates then \Eqref{Eq:ScalarFieldEq_4D} \emph{does not} follow from the divergence-free condition $D^\mu T_{\mu\nu}=0$. Even in the spatially homogeneous setting this approach makes sense only if the scalar field $\phi$ is \emph{assumed} to be strictly monotonic.

	We now discuss how the evolution equation for $\nu$ (\Eqref{Eq:ADM_Homogenous_nu}) changes when the potential $V(t)$ is a known function of time. To this end, we suppose that $\phi$ is strictly monotonic on $I$. Then, we can write
	\begin{align}
	\frac{\partial V}{\partial t}=\frac{\partial V}{\partial \phi}\frac{\partial \phi}{\partial t}=\alpha\nu \frac{\partial V}{\partial \phi}\implies\alpha V^{\prime}(\phi)=\frac{1}{\nu}\partial_{t}V.
	\label{Eq:V_of_t}
	\end{align}
	Notice that if $\phi$ is not strictly monotonic on $I$ (and hence there is a $t_{\dagger}\in I$ such that $(\partial_{t}\phi)(t_{\dagger}) = \nu(t_{\dagger})=0$) then it would not be possible to manipulate the potential in this way. \Eqref{Eq:V_of_t} now allows us to re-express the scalar field equation \Eqref{Eq:Evol_nu} as  
	\begin{align}
	t\partial_{t}\left( {\nu^{2}} \right) =-2\left( \alpha\nu^{2} + t\partial_{t}V(t) \right).
	\label{Eq:NuEq_V(t)}
	\end{align}
	Note that one can guarantee $\partial_{t}\phi \neq 0$ only \emph{after} \Eqref{Eq:NuEq_V(t)} is solved.

	\subsection[Integral formulas and the isotropic case]{Bianchi I integral formulas for a strictly monotonic scalar field and the isotropic case}
	\label{Section:Integral_formulas_and_the_isotropic_case}
	In this section here we consider generic solutions of \Eqsref{Eq:ADM_Homogenous_A}--\eqref{Eq:Lapse_Homogenous} in terms of an arbitrary potential $V(t)$ (such that $V(t)<V_{\star}$ for all $t\in I$) with a strictly monotonic scalar field $\phi$. 
	\begin{Thm}
		\label{Thm_1}
		Consider the spatially homogeneous fields $(\gamma_{ab},{\chi^a}_b,\phi,\nu,\alpha,V(t))$ which are solutions of \Eqsref{Eq:ADM_Homogenous_A}--\eqref{Eq:Lapse_Homogenous}, and suppose that the scalar field $\phi$ is monotonic on the interval $I=\left( 0, T \right]$ for some $T>0$ and $V(t)<1/(3t^2)$ for all $t\in I$. Then, under these restrictions,
		the general solution of \Eqref{Eq:ADM_Homogenous_A} (\Eqref{Eq:Evol_A}) is
		\begin{align}
		{\chi^a}_b =  {C^a}_b \exp\left( f(t) \right),\quad f(t)=-\int_{T}^t \frac{1}{\tau\left( 1-3\tau^2 V(\tau) \right)} d\tau,
		\label{Eq:GeneralSolution_A}
		\end{align}	
		and the general solutions of \Eqsref{Eq:ADM_Homogenous_nu} and \eqref{Eq:ADM_Homogenous_phi} (\Eqsref{Eq:Evol_nu} and \eqref{Eq:Evol_phi}) are 
		\begin{align}
		\nu^{2}=\frac{2}{3t^2} \left( 1 - 3 t^2 V(t) \right) + \C_\nu \text{e}^{2 f\left( t \right)},
		\quad 
		\phi=\C_\phi + \int_{T}^t \frac{\nu(\tau)}{1-3\tau^2 V(\tau)} d\tau,
		\label{Eq:General_nu}
		\end{align}
		where $\C_\nu,\C_\phi$ are integration constants and ${C^a}_b$ is a trace-free (${C^a}_a =0$), symmetric ($C_{ab}=C_{ba}$) tensor with constant entries. Moreover, the fields $({C^a}_b,\C_\nu)$ satisfy the ``Kasner constraint''
		\begin{align}
		{C^a}_b {C^b}_a + \C_\nu =0.
		\label{Eq:GeneralKasnerConstraint}
		\end{align}
		Furthermore, if $\C_\nu =0$ then ${C^a}_{b}=0$ and 
		\begin{align}
		\gamma_{ab}=\tilde{\gamma}_{ab}\exp\left( -\frac{2}{3}f(t) \right),
		\label{Eq:IsotropicMetric_WithPotential}
		\end{align}
		where $\tilde{\gamma}_{ab}$ is a symmetric tensor ($\tilde{\gamma}_{ab}=\tilde{\gamma}_{ba}$) with constant entries.
	\end{Thm}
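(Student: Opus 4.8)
The plan is to integrate the three decoupled evolution equations \Eqref{Eq:ADM_Homogenous_A}, \Eqref{Eq:NuEq_V(t)} and \Eqref{Eq:ADM_Homogenous_phi} one at a time, reading off each integration constant as the field value at $t=T$, and then to extract the Kasner constraint from the Hamiltonian constraint \Eqref{Eq:ADM_Homogenous_Hamiltonian}; the isotropic statement will fall out as a short corollary of that constraint together with the metric equation \Eqref{Eq:ADM_Homogenous_y}. First I would note that, since $\phi$ is monotonic and $V(t)<1/(3t^2)$ on $I$, the algebraic lapse \Eqref{Eq:Lapse_Homogenous} is smooth and strictly positive on $I$, with $\alpha/t = 1/(t(1-3t^2V)) = -f'(t)$. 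Equation \Eqref{Eq:ADM_Homogenous_A} is then a linear scalar ODE for each component of ${\chi^a}_b$; separating variables and integrating from $T$ to $t$ (using $f(T)=0$) gives ${\chi^a}_b = {C^a}_b\,e^{f(t)}$ with ${C^a}_b := {\chi^a}_b(T)$. Since $e^{f}$ is a scalar, ${C^a}_b$ inherits symmetry and trace-freeness from $\chi$ and has constant entries, which is \Eqref{Eq:GeneralSolution_A}.

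For the matter fields the essential reduction is already recorded: \Eqref{Eq:V_of_t} turns the evolution equation for $\nu$ into \Eqref{Eq:NuEq_V(t)}, which in the variable $u := \nu^2$ is the linear equation $u' + (2\alpha/t)u = -2\,\partial_t V$. Its homogeneous solutions are multiples of $e^{2f}$ (because $2\alpha/t = -2f'$), and I would check by direct substitution that $u_p(t) = \tfrac{2}{3t^2}\big(1-3t^2V(t)\big)$ is a particular solution; hence $\nu^2 = u_p + \C_\nu e^{2f}$ with $\C_\nu$ determined by the value at $t=T$. Strict monotonicity of $\phi$ and positivity of $\alpha$ force $\nu$ to keep one sign on $I$, so $\nu = \pm\sqrt{u_p+\C_\nu e^{2f}}$ is unambiguous, and integrating $\partial_t\phi = \alpha\nu = \nu/(1-3t^2V)$ from $T$ to $t$ gives the stated formula for $\phi$ with $\C_\phi := \phi(T)$. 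This establishes \Eqref{Eq:General_nu}.

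To obtain the Kasner constraint I would substitute ${\chi^a}_b = {C^a}_b e^{f}$ and the formula for $\nu^2$ into the Hamiltonian constraint \Eqref{Eq:ADM_Homogenous_Hamiltonian}: the left-hand side becomes ${C^a}_b{C^b}_a\,e^{2f}$ and the right-hand side becomes $-\C_\nu e^{2f}$, so dividing by the strictly positive factor $e^{2f}$ yields ${C^a}_b{C^b}_a + \C_\nu = 0$, i.e. \Eqref{Eq:GeneralKasnerConstraint}. (If one does not wish to impose \Eqref{Eq:ADM_Homogenous_Hamiltonian} as part of the data, a direct computation shows that the quantity ${\chi^a}_b{\chi^b}_a - \tfrac{2}{3t^2}+\nu^2+2V$ satisfies $\partial_t(\,\cdot\,) = -(2\alpha/t)(\,\cdot\,)$ along the flow, so the constraint propagates and need only be imposed at $t=T$.) Finally, if $\C_\nu=0$ then ${C^a}_b{C^b}_a=0$; because ${\chi^a}_b$ is self-adjoint with respect to the positive-definite metric $\gamma_{ab}$, the scalar ${C^a}_b{C^b}_a$ is the sum of the squares of the (real) eigenvalues of ${C^a}_b$, so it vanishes only if ${C^a}_b=0$, whence $\chi_{ab}\equiv0$. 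Then \Eqref{Eq:ADM_Homogenous_y} reduces to $\partial_t\gamma_{ab} = (2\alpha/3t)\gamma_{ab} = -\tfrac23 f'(t)\,\gamma_{ab}$, which integrates to $\gamma_{ab} = \tilde\gamma_{ab}\,e^{-\tfrac23 f(t)}$ with $\tilde\gamma_{ab} := \gamma_{ab}(T)$, giving \Eqref{Eq:IsotropicMetric_WithPotential}.

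The integrations themselves are routine; the two places where I expect the real content to sit are (i) verifying that $u_p$ — equivalently, the $\C_\nu$-independent piece of the claimed $\nu^2$ — actually solves the linear ODE, since this is where the precise form $\tfrac{2}{3t^2}(1-3t^2V)$ is pinned down, and (ii) the implication ${C^a}_b{C^b}_a=0 \Rightarrow {C^a}_b=0$, which genuinely uses the Riemannian signature of $\gamma_{ab}$ and would fail for an indefinite metric. By comparison, the sign ambiguity in $\nu=\pm\sqrt{\,\cdot\,}$ is harmless precisely because strict monotonicity fixes it globally on $I$, and the propagation-of-constraints remark, while reassuring, is not strictly needed if \Eqref{Eq:ADM_Homogenous_Hamiltonian} is taken as part of the definition of a Bianchi I cosmology.
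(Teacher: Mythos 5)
Your proposal is correct and follows essentially the same route as the paper: integrate the decoupled evolution equations (the paper leaves these as ``straightforward'' verifications, which you carry out explicitly via the homogeneous/particular-solution structure), substitute into the Hamiltonian constraint \Eqref{Eq:ADM_Homogenous_Hamiltonian} and divide by $\text{e}^{2f}>0$ to obtain \Eqref{Eq:GeneralKasnerConstraint}, and integrate \Eqref{Eq:ADM_Homogenous_y} with ${\chi^a}_b=0$ for the isotropic case. Your added justifications --- the positive-definiteness argument for ${C^a}_b{C^b}_a=0\Rightarrow {C^a}_b=0$ and the constraint-propagation remark --- are correct refinements of steps the paper asserts without comment.
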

	
	\begin{proof}
		It is straightforward to show that \Eqref{Eq:GeneralSolution_A} is the general solution of \Eqref{Eq:ADM_Homogenous_A}; one only needs to note that 
		\begin{align}
		\frac{df(t)}{dt}=-\frac{1}{t\left( 1- 3 t^2 V(t) \right)}=-\frac{\alpha}{t}.
		\end{align}
		Similarly, by direct calculation one easily checks that \Eqsref{Eq:General_nu} solves \Eqsref{Eq:ADM_Homogenous_phi}.
		
		Inputting the solutions \Eqref{Eq:GeneralSolution_A} and \Eqsref{Eq:General_nu} into the Hamiltonian constraint \Eqref{Eq:ADM_Homogenous_Hamiltonian}, gives 
		\begin{align}
		\left( {C^a}_b {C^b}_a + \C_\nu \right)\text{e}^{2f(t)}=0,
		\end{align}
		and hence \Eqref{Eq:GeneralKasnerConstraint} must hold. Suppose now that $\C_\nu=0$. Then, from \Eqref{Eq:GeneralKasnerConstraint} we find that we must have ${C^a}_b {C^b}_a=0$ which holds if and only if ${C^a}_b={\chi^a}_b=0$. Then, \Eqref{Eq:Evol_y} becomes
		\begin{align}
		\partial_t\gamma_{ab}=\frac{2}{3(1-3t^2 V(t))t}\gamma_{ab}.
		\label{Eq:IsotropicMetricEq}
		\end{align} 
		It is now straightforward to show that \Eqref{Eq:IsotropicMetric_WithPotential} is the general solution of \Eqref{Eq:IsotropicMetricEq}.
	\end{proof}
	There are three interesting notes to be made about \Thmref{Thm_1}: Firstly, the function $f(t)$ can be given geometrical significance by noting that $f(t)=\ln(\gamma(t))/2$ where $\gamma=\det(\gamma_{ab})$. Secondly, the \emph{``Kasner constraint''} \Eqref{Eq:GeneralKasnerConstraint} earns its name since it reduces to the standard Kasner constraint, presented in \Sectionref{Sec:Kasner_solutions_with_a_scalar_field}, when the potential $V(t)$ is identically zero. However, if $V(t)\ne 0$, then the fields $({C^a}_b,\C_\nu)$ only retain their geometrical significance if the potential $V(t)$ satisfies particular decay conditions. We discuss this further in \Sectionref{Sec:Asymptotically_Kasner_solutions_in_Bianchi_I_cosmologies}. Thirdly, in the special case $\C_\nu =0$ the exact solution for $\nu$ is $\nu=\nu_\star$, where $\nu_{\star}$ is defined such that 
	\begin{align}
	\nu_{\star}^2= \frac{2}{3t^2} \left( 1 - 3 t^2 V(t) \right)=\frac{2}{3t^2 \alpha},
	\label{SpecialSolutionForPi}
	\end{align} 
	which follows directly from \Eqref{Eq:General_nu}. Moreover, \Thmref{Thm_1} tells that, in this setting, the metric is isotropic. We therefore interpret this solution (corresponding to $\C_\nu=0$) as the generalisation of the FLRW solution (with zero potential) discussed in \Sectionref{Sec:Kasner_solutions_with_a_scalar_field} to include a (possibly non-zero) potential $V(t)$. We further support this interpretation by noting that if $\C_\nu=0$ and $V(t)=0$ then the isotropic Kasner scalar field solution (discussed in \Sectionref{Sec:Kasner_solutions_with_a_scalar_field}) is returned. In fact, if one first assumes that the metric $\gamma_{ab}$ is isotropic, then it is possible to show that \Eqref{SpecialSolutionForPi} still gives the exact solution for $\nu$ with $V(t)$ replaced with $V(\phi)$. This isotropic solution is particularly important as it is an explicit example that can be used to offer insight into the possibility of a gauge breakdown.  
	
	\begin{Thm}
		\label{Corollary}
		Consider the spatially homogeneous fields $(\gamma_{ab},{\chi^a}_b,\phi,\nu,\alpha,V(\phi))$ which are solutions of the Einstein scalar field equations \Eqsref{Eq:ADM_Homogenous_A}--\eqref{Eq:Lapse_Homogenous} on the interval $I=\left( 0, T \right]$ for some $T>0$ such that, at $t=T$, we have 
		\begin{align}
		\nu(T)^2=\frac{2}{3 T^2}\left( 1- 3 T^2 V(\phi(T)) \right),
		\label{Eq:Coro_IDChoice}
		\end{align}
		with $\nu(T)\ne 0$ and $T^2 V(\phi(T))<1/3$. Then, $V(\phi(t))<1/(3t^2)$ for all $t\in I$ if and only if $\phi(t)$ is strictly monotonic on $I$.
	\end{Thm}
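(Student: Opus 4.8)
The plan is to read the initial condition \eqref{Eq:Coro_IDChoice} as the ``isotropic'' / $\C_\nu=0$ data of \Thmref{Thm_1}, and to extract both implications from the Hamiltonian constraint \eqref{Eq:ADM_Homogenous_Hamiltonian} together with the fact that $\gamma_{ab}$ is a Riemannian metric on $\Sigma_t$, so that ${\chi^a}_b{\chi^b}_a\ge 0$, with equality precisely when ${\chi^a}_b=0$.

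For the ``only if'' direction, suppose $\phi$ is strictly monotonic, so $\nu(t)\ne 0$ for all $t\in I$. Evaluating \eqref{Eq:ADM_Homogenous_Hamiltonian} and discarding the non-negative term ${\chi^a}_b{\chi^b}_a$ gives
\[
0<\nu(t)^2\le \frac{2}{3t^2}-2V(\phi(t))=\frac{2}{3t^2}\bigl(1-3t^2V(\phi(t))\bigr),
\]
whence $1-3t^2V(\phi(t))>0$, i.e.\ $V(\phi(t))<1/(3t^2)$, for every $t\in I$. This half uses neither \eqref{Eq:Coro_IDChoice} nor \Thmref{Thm_1}; it is the converse that requires real work.

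For the ``if'' direction, assume $V(\phi(t))<1/(3t^2)$ throughout $I$, so that $\alpha=1/(1-3t^2V(\phi))$ is finite and positive on $I$. The crux is to show ${\chi^a}_b\equiv 0$. Evaluating \eqref{Eq:ADM_Homogenous_Hamiltonian} at $t=T$ and inserting \eqref{Eq:Coro_IDChoice} forces ${\chi^a}_b{\chi^b}_a\big|_{t=T}=0$, hence ${\chi^a}_b(T)=0$ by positive-definiteness of $\gamma_{ab}$. Now \eqref{Eq:ADM_Homogenous_A} is, componentwise, a linear homogeneous ODE with the continuous coefficient $-\alpha/t$ on $I$, and the derivation of \eqref{Eq:GeneralSolution_A} in \Thmref{Thm_1} uses only \eqref{Eq:ADM_Homogenous_A} and $df/dt=-\alpha/t$ --- not the monotonicity of $\phi$, so there is no circularity --- so ${\chi^a}_b={C^a}_b\exp(f(t))$; since $f(T)=0$ this gives ${C^a}_b={\chi^a}_b(T)=0$, i.e.\ ${\chi^a}_b\equiv 0$ on $I$. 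Substituting back into \eqref{Eq:ADM_Homogenous_Hamiltonian} yields
\[
\nu(t)^2=\frac{2}{3t^2}\bigl(1-3t^2V(\phi(t))\bigr)>0\qquad\text{for all }t\in I,
\]
so $\nu(t)\ne 0$, and since $\partial_t\phi=\alpha\nu$ with $\alpha>0$ we also have $\partial_t\phi(t)\ne 0$; hence $\phi$ is strictly monotonic on $I$.

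I expect the main obstacle to be exactly the propagation ${\chi^a}_b(T)=0\Rightarrow{\chi^a}_b\equiv 0$: one must justify it \emph{before} knowing that $\phi$ is monotone (so \Thmref{Thm_1} cannot be invoked wholesale) and must know that $\alpha$ is continuous on all of $I$ so the linear evolution equation has a unique solution --- both of which are supplied by the standing hypothesis $V(\phi(t))<1/(3t^2)$ and by the fact that the part of the proof of \Thmref{Thm_1} producing \eqref{Eq:GeneralSolution_A} is insensitive to monotonicity. It is worth recording that the special data \eqref{Eq:Coro_IDChoice} is genuinely needed here: without it a solution may have a turning point $t_\dagger$ with $\nu(t_\dagger)=0$ while still satisfying $V(\phi(t))<1/(3t^2)$ on $I$, since the Hamiltonian constraint then only demands ${\chi^a}_b{\chi^b}_a|_{t_\dagger}\ge 0$ rather than $=0$.
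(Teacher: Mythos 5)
Your proposal is correct, and it reaches the result by a somewhat different route than the paper. The paper runs both implications through the explicit isotropic solution supplied by \Thmref{Thm_1}: the data \eqref{Eq:Coro_IDChoice} forces $\C_\nu=0$ (and hence ${C^a}_b=0$ via the Kasner constraint \eqref{Eq:GeneralKasnerConstraint}), so $\nu=\nu_\star$ on any interval of monotonicity; if monotonicity fails at some $t_0$ then $\nu\to 0$ forces $V(\phi(t))\to 1/(3t_0^2)$ there, and conversely if $V$ ever reached $1/(3t^2)$ then $\nu_\star$, hence $\nu$, would vanish, contradicting monotonicity. Your direction ``$V(\phi(t))<1/(3t^2)$ on $I$ implies monotonicity'' carries essentially the same content --- you too land on $\nu^2=\nu_\star^2>0$ --- but you make explicit what the paper leaves implicit: that ${\chi^a}_b(T)=0$ (which the paper only remarks upon) propagates to ${\chi^a}_b\equiv 0$ by uniqueness for the linear equation \eqref{Eq:ADM_Homogenous_A}, equivalently via \eqref{Eq:GeneralSolution_A} with $f(T)=0$, and that this step needs only continuity and positivity of $\alpha$, not monotonicity of $\phi$, so there is no circularity. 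Your other direction is genuinely more elementary and more general than the paper's: from the Hamiltonian constraint \eqref{Eq:ADM_Homogenous_Hamiltonian} and ${\chi^a}_b{\chi^b}_a\ge 0$ alone, $\nu\neq 0$ already forces $V(\phi(t))<1/(3t^2)$, with no appeal to \eqref{Eq:Coro_IDChoice} or to \Thmref{Thm_1}; this exhibits that half of the equivalence as valid for arbitrary Bianchi I solutions, the special isotropic data being needed only for the converse --- a point your closing remark about turning points makes correctly. (Two small notes: your ``if''/``only if'' labels are swapped relative to the statement, though both implications are proved so the equivalence stands; and, like the paper, you use the Hamiltonian constraint at all times $t\in I$, which is consistent with the paper's reading of ``solutions of the Einstein scalar field equations'' and with constraint propagation in this gauge.)
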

	We remark here that since the fields $(\gamma_{ab},{\chi^a}_b,\phi,\nu,\alpha,V(\phi))$ in \Thmref{Corollary} are solutions of the EFEs, \Eqref{Eq:Coro_IDChoice} together with the Hamiltonian constraint \Eqref{Eq:ADM_Homogenous_Hamiltonian} implies that \[{\chi^a}_{b}(T)=0.\] 
	\begin{proof}
		Suppose first that the scalar field $\phi$ is not strictly monotonic on $I$. Then, there exists at least one time $t_0\in \left( 0, T \right)$ such that $\phi$ is strictly monotonic on the interval $I_0 := \left( t_{0}, T \right]$ with $\nu(t_0)=(\partial_{t}\phi)(t_0)=0$. Then, from \Eqref{Eq:Coro_IDChoice} it follows that $\nu(t)=\nu_{\star}(t)$ for all $t\in I_0$, where $\nu_\star(t)$ is given by \Eqref{SpecialSolutionForPi}, and hence 
		\begin{align}
		\nu^{2}= \frac{2}{3 t^2}\left( 1- 3 t^2 V(\phi(t)) \right)\quad \text{for all $t\in I_0$}.
		\label{Eq:Coro_dphi}
		\end{align}
		Recall that $\nu\rightarrow 0$ in the limit $t\rightarrow t_0^+$. Then, from \Eqref{Eq:Coro_dphi} we see that as $t\rightarrow t_0^+$ we must have $V(\phi(t))\rightarrow 1/(3t_0^2)$, as was claimed.

		Suppose now that $\phi(t)$ is a strictly monotonic function. Then, as noted previously, the exact solution is $\nu(t)=\nu_\star(t)$ for all $t\in I$, and hence $\nu$ is given by \Eqref{Eq:Coro_dphi} with $t_0=0$. Now, if there exists a point $t_\star$ such that $V(\phi(t_\star))= 1/(3t_\star^2)$ then \Eqref{Eq:Coro_dphi} tells that $\nu(t_\star)=0$ which contradicts our initial assumption that $\phi$ is a strictly monotonic function. It follows then that such a $t_\star$ cannot exist.
	\end{proof}
	At first glance \Thmref{Corollary}, appears only to be a statement about the effect of a particular choice of initial data for the function $\nu$. However, it must be emphasized that picking $\nu(T)^2 = \nu_{\star}(T)^2$ is the \emph{only} choice of initial data that gives rise to an isotropic solution. Furthermore, we note that \Thmref{Corollary} relies on the fact that $\nu(t)=\nu_{\star}(t)$ for all $t\in \left[ T, t_0 \right)$. It follows that if $\C_\nu$ (see \Eqref{Eq:General_nu}) is small then we \emph{do not} expect \Thmref{Corollary} to hold.
	
	Although this result shows that a singularity exists (for scalar fields $\phi$ that are not strictly monotonic on an isotropic space-time) it says nothing about the nature of the singularity itself. To understand whether or not this is a coordinate singularity we consider the following curvature invariants:
	\begin{align}
	\prescript{(4)}{}{R}=4V(\phi(t))-\nu^{2}&,
	\quad
	\prescript{(4)}{}{R}^{\mu\nu}\prescript{(4)}{}{R}_{\mu\nu}=\nu^{4}+4V(\phi(t))^{2}-2\nu^{2}V(\phi(t)),
	\end{align}
	and 
	\begin{align}
	\prescript{(4)}{}{R}^{\mu\nu\alpha\beta}\prescript{(4)}{}{R}_{\mu\nu\alpha\beta}=\frac{2}{27t^{4}}+\frac{4}{3t^{2}}\left( \frac{1}{\alpha} - \frac{5}{3t^2} \right)^{2}.
	\end{align}
	Taking the limit towards the singularity gives
	\begin{align}
	\prescript{(4)}{}{R}\rightarrow \frac{4}{3t^{2}_\star},
	\quad
	\prescript{(4)}{}{R}^{\mu\nu}\prescript{(4)}{}{R}_{\mu\nu}\rightarrow \frac{4}{9t^{4}_\star},
	\quad
	\prescript{(4)}{}{R}^{\mu\nu\alpha\beta}\prescript{(4)}{}{R}_{\mu\nu\alpha\beta}\rightarrow \frac{2}{27t_\star^{4}}\left( 1 +\frac{50}{t_\star^{2}}\right),
	\quad
	\text{as}
	\quad
	t\rightarrow t_\star.
	\end{align}
	This strongly suggests (but does not prove) that $t=t_\star>0$ is a coordinate singularity. Recall that the only conclusive way to show that $t=t_\star$ is a physical singularity is to find a coordinate system for which $t=t_\star$ is not a singularity. Note here that we interpret \Thmref{Corollary} as implying that CMC gauge with zero shift is poorly suited to this particular type of space-time.

	\subsection{Asymptotically Kasner solutions in Bianchi I cosmologies}
	\label{Sec:Asymptotically_Kasner_solutions_in_Bianchi_I_cosmologies}
	In this section here we now discuss what properties a time-dependent potential must possess if the corresponding (spatially homogeneous) solutions are to be \emph{asymptotically Kasner}, a notion that we define in the following way:

	\begin{defn}
		\label{Def:AsymptoticallyKasner}
		Consider the spatially homogeneous fields $(\gamma_{ab},{\chi^a}_b,\phi,\nu,\alpha,V(\phi))$ which are solutions of the Einstein scalar field equations \Eqsref{Eq:ADM_Homogenous_A}--\eqref{Eq:Lapse_Homogenous} on the interval $I=\left( 0, T \right]$ with $T>0$. Then, we say that the fields $(\gamma_{ab},{\chi^a}_b,\phi,\nu,\alpha,V(\phi))$ are ``{asymptotically Kasner}" if the limits
		\begin{align}
		\lim_{t\rightarrow 0^{+}}t{\chi^a}_b = {\C^a}_b, \quad \lim_{t\rightarrow 0^{+}}t\nu = A,\quad \lim_{t\rightarrow 0^{+}}\alpha = 1,
		\label{Eq:Def_Limits}
		\end{align}
		exist, where $A\in [-\sqrt{2/3},\sqrt{2/3}]$  and ${\C^a}_b{\C^b}_a \in \left[ 0, 2/3 \right]$ such that
		\begin{align}
		{\C^a}_b{\C^b}_a + A^2 = \frac{2}{3}.
		\end{align}
		If a spatially inhomogeneous solution $(\gamma_{ab},{\chi^a}_b,\phi,\nu,\alpha)$ of the Einstein scalar field equations \Eqsref{Eq:Evol_A}--\eqref{Eq:LapseEq} is asymptotically Kasner for each fixed spatial point $p\in\Sigma$ then we say that the solution $(\gamma_{ab},{\chi^a}_b,\phi,\nu,\alpha)$ is ``{asymptotically point-wise Kasner}".
	\end{defn}
	Notice that the definition presented here is less restrictive than the one given in \cite{LapseScalarField}. In fact, the two definitions are \emph{not} equivalent: \cite{LapseScalarField} imposes further restrictions on ``how fast" the limits \Eqref{Eq:Def_Limits} are obtained. As such, it is possible that what we claim to be asymptotically Kasner, is \emph{not} regarded as such by \cite{LapseScalarField}. We do not discuss this any further as \Defref{Def:AsymptoticallyKasner} is sufficient for our purposes here.
	
	Loosely speaking, \Defref{Def:AsymptoticallyKasner} tells us that if a solution is asymptotically (point-wise) Kasner then it can be \emph{matched} (point-wise) to a Kasner scalar field solution with zero potential (see \Sectionref{Sec:Kasner_solutions_with_a_scalar_field}). To discuss this further we note that the Hamiltonian constraint \Eqref{Eq:ADM_Homogenous_Hamiltonian} can (in the spatially homogeneous setting) be written as 
	\begin{align}
	t^2 {\chi^a}_b{\chi^b}_a + t^2 \nu^2 =\frac{2}{3} + 2 t^2 V(\phi(t)).
	\label{Eq:RelativeHamiltonian}
	\end{align}
	From \Defref{Def:AsymptoticallyKasner} we find that the quantities $t\nu$ and $t{\chi^a}_b$ are bounded functions of time, and the fields $({\C^a}_b,A)$ are the asymptotic values of $t{\chi^a}_b$ and $t\nu$, respectively, at $t=0$. Of course, this interpretation only makes sense if the potential $V(\phi)$ decays sufficiently fast. From \Eqref{Eq:RelativeHamiltonian} one immediately sees that we require $t^2 V(\phi)\rightarrow 0$ as $t\rightarrow 0^+$. This observation can be formulated more rigorously, for a strictly monotonic scalar field $\phi$, in the following way:

	\begin{Thm}
		\label{Result:Kanser}
		Let $V(t)$ be $C^{1}(I)$ on an interval $I=\left( 0, T \right]$ for some fixed constant $T>0$ and suppose that $\partial_{t}\phi\ne 0$ for all $t\in I$ and that there is no $t_{\star}\in I$ such that $V(t_\star)=1/(3t_{\star}^2)$. Further more, suppose that there exists a constant $\epsilon>0$ such that the quantity $U(t)=t^{2-\epsilon}V(t)$ is bounded on $I$. Then,
		\begin{itemize}
			\item[(1)] if the solutions to the Bianchi I scalar field equations \Eqsref{Eq:ADM_Homogenous_A} -- \eqref{Eq:Lapse_Homogenous} are asymptotically Kasner then 
			\begin{align}
			\text{if}\;\; t\rightarrow 0^{+}\;\; \text{then}\;\; t^{2}V(t)\rightarrow 0.
			\end{align}
			\item[(2)] Moreover,
			\begin{align}
			\text{if}\;\; t\rightarrow 0^{+}\;\; \text{then}\;\; U(t)\rightarrow 0,
			\label{Eq:KasnerCondition}
			\end{align}
			then the solutions to the Bianchi I scalar field equations \Eqsref{Eq:ADM_Homogenous_A} -- \eqref{Eq:Lapse_Homogenous} are asymptotically Kasner
		\end{itemize} 
	\end{Thm}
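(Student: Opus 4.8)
The plan is to prove the two implications separately. The first is immediate from the Hamiltonian constraint; the second reduces to the explicit integral formulas of \Thmref{Thm_1}. For part~(1) I would start from the rescaled Hamiltonian constraint \Eqref{Eq:RelativeHamiltonian}, namely $t^2{\chi^a}_b{\chi^b}_a + t^2\nu^2 = \tfrac{2}{3} + 2t^2 V(\phi(t))$. If the solution is asymptotically Kasner then, by \Defref{Def:AsymptoticallyKasner}, $t{\chi^a}_b\to{\C^a}_b$ and $t\nu\to A$ as $t\to 0^+$, so the left-hand side converges to ${\C^a}_b{\C^b}_a + A^2 = \tfrac{2}{3}$. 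Hence $2t^2 V(\phi(t))$, being the difference of the left-hand side and $\tfrac{2}{3}$, tends to $0$, which is the assertion. (Only continuity of the fields and the constraint enter here; the hypothesis on $U$ is not needed for this direction.)

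For part~(2) I would first check that \Thmref{Thm_1} applies: since $\partial_t\phi\ne 0$ and $\alpha$ is finite and positive we have $\nu\ne 0$ on $I$, so $\phi$ is strictly monotonic; and since $t^2 V(t) = t^{\epsilon} U(t)\to 0$, we get $1-3t^2 V(t)\to 1$ as $t\to 0^+$, which together with the absence of a $t_\star$ and the continuity of $V$ forces $1-3t^2V(t)>0$ on all of $I$, i.e.\ $V(t)<1/(3t^2)$. Thus \Thmref{Thm_1} supplies ${\chi^a}_b = {C^a}_b\,e^{f(t)}$, $\nu^2 = \tfrac{2}{3t^2}(1-3t^2V(t)) + \C_\nu e^{2f(t)}$, and the Kasner constraint ${C^a}_b{C^b}_a + \C_\nu = 0$. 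The crucial step is then to control $f(t)$. Writing
\[
f(t) = \int_t^T \frac{d\tau}{\tau(1-3\tau^2 V(\tau))} = -\ln(t/T) + g(t), \qquad g(t) := \int_t^T \frac{3\tau V(\tau)}{1-3\tau^2 V(\tau)}\,d\tau ,
\]
I would show that $g(t)$ has a finite limit $g_0$ as $t\to 0^+$. This is exactly where the hypothesis enters: $\tau V(\tau) = \tau^{\epsilon-1}U(\tau)$ with $U$ bounded on $I$, and $1-3\tau^2 V(\tau)$ is bounded away from zero on $I$, so the integrand is $O(\tau^{\epsilon-1})$, which is integrable near $\tau=0$ because $\epsilon>0$. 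Consequently $e^{f(t)} = (T/t)\,e^{g(t)}$.

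Granting this, the three limits of \Defref{Def:AsymptoticallyKasner} follow by substitution. From \Eqref{Eq:Lapse_Homogenous}, $\alpha = 1/(1-3t^2V(t))\to 1$. From the formula for ${\chi^a}_b$, $t{\chi^a}_b = {C^a}_b\,T e^{g(t)}\to {C^a}_b\,T e^{g_0}=:{\C^a}_b$, so that limit exists. And $t^2\nu^2 = \tfrac{2}{3}(1-3t^2V(t)) + \C_\nu T^2 e^{2g(t)}\to \tfrac{2}{3} + \C_\nu T^2 e^{2g_0}$, which by the Kasner constraint ($\C_\nu = -{C^a}_b{C^b}_a$) and the identity ${\C^a}_b{\C^b}_a = T^2 e^{2g_0}{C^a}_b{C^b}_a$ equals $\tfrac{2}{3} - {\C^a}_b{\C^b}_a =: A^2$. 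Since $\nu$ keeps a fixed sign (strict monotonicity of $\phi$), it follows that $t\nu\to A:=\operatorname{sgn}(\nu)\sqrt{\tfrac{2}{3}-{\C^a}_b{\C^b}_a}$, so that limit exists as well. Finally I would pin down the ranges: ${\C^a}_b{\C^b}_a\ge 0$ since $C_{ab}$ is symmetric with respect to the positive-definite $\gamma_{ab}$, hence the trace of the square of a self-adjoint endomorphism, and ${\C^a}_b{\C^b}_a\le\tfrac{2}{3}$ since $A^2$ is a limit of the non-negative quantities $t^2\nu^2$; together with ${\C^a}_b{\C^b}_a + A^2 = \tfrac{2}{3}$ this is precisely \Defref{Def:AsymptoticallyKasner}, so the solution is asymptotically Kasner.

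The step I expect to be the main obstacle is the convergence of $g(t)$ in part~(2): everything downstream of it is bookkeeping, but the mere existence of $\lim_{t\to 0^+} t{\chi^a}_b$ — equivalently of $\lim_{t\to 0^+}\bigl(f(t)+\ln t\bigr)$ — is what would fail if $V$ were not controlled near the singularity, and it is precisely the integrability of $\tau^{\epsilon-1}U(\tau)$ near $\tau=0$ that rescues it. Two smaller points also need attention: that the sign of $A$ is fixed by the monotonicity of $\phi$, and that one must establish the two-sided bound ${\C^a}_b{\C^b}_a\in[0,2/3]$, not merely the constraint equation.
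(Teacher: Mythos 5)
Your proof is correct and takes essentially the same route as the paper: part (1) by passing to the limit in the rescaled Hamiltonian constraint \Eqref{Eq:RelativeHamiltonian}, and part (2) by invoking \Thmref{Thm_1} and splitting $f(t)=-\ln t+(\text{a convergent integral})$, where the boundedness of $U$ makes the correction integrand $O(\tau^{\epsilon-1})$ and hence integrable, giving the limits of $\alpha$, $t{\chi^a}_b$ and $t\nu$. If anything, your write-up is a little more careful than the paper's at the bookkeeping stage (retaining $\C_\nu$ in the formula for $\nu^2$, fixing the sign of $A$ via monotonicity of $\phi$, and deriving $V(t)<1/(3t^2)$ on $I$ from the no-$t_\star$ hypothesis), but the underlying argument is the same.
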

	Note that \Eqref{Eq:KasnerCondition} can be weakened so that $U(t)\rightarrow constant$ as $t\rightarrow 0$. However, one can always pick a slightly smaller $\epsilon$ so that $U(t)\rightarrow 0$ as $t\rightarrow 0$. We can therefore assume \Eqref{Eq:KasnerCondition} without loss of generality.
	\begin{proof}
		Let us first assume that the solutions are asymptotically Kasner. Then, from \Defref{Def:AsymptoticallyKasner}, we have that the limits 
		\begin{align}
		C^{a}_{b}=\lim_{t\rightarrow 0^+}t {\chi^a}{}_{b},
		\quad
		A=\lim_{t\rightarrow 0^+}t \nu,
		\end{align}
		are well defined and finite. Moreover, the (constant) fields $(A,C^{a}{}_{b})$ satisfy the algebraic relation
		\begin{align}
		{\C^a}_b{\C^b}_a + A^2 = \frac{2}{3}.
		\end{align}
		Now, taking the limit $t\rightarrow 0^+$ of the Hamiltonian constraint \Eqref{Eq:RelativeHamiltonian} we get
		\begin{align}
		{\C^a}_b{\C^b}_a + A^2 = \frac{2}{3} + \lim_{t\rightarrow 0^{+}} \left( t^{2}V(t) \right) \implies \lim_{t\rightarrow 0^{+}} \left( t^{2}V(t) \right)=0.
		\end{align}
		This proves the statement. Suppose now that $\lim_{t\rightarrow 0^+}U(t)=0$ and note that, if $\phi$ is strictly monotonic on $I$, then, from \Thmref{Thm_1} we see that a solution set is asymptotically Kasner in the sense of \Defref{Def:AsymptoticallyKasner} provided that 
		\begin{align}
		\text{if $t\rightarrow 0^+$ then $t\text{e}^{f(t)}\rightarrow 0$ and $\alpha\rightarrow 1$}, 
		\end{align}
		where $f(t)$ was defined in \Eqref{Eq:GeneralSolution_A}. To show that these limits hold we define the function $g(t)$ as
		\begin{align}
		g(t):= \frac{3 U(t) }{1-3 t^\epsilon U(t)},
		\end{align}
		so that the lapse $\alpha$ can be written as
		\begin{align}
		\alpha = 1 + t^{\epsilon}g(t).
		\end{align}
		It follows from \Eqref{Eq:KasnerCondition} that $g(t)\rightarrow 0$ as $t\rightarrow 0^+$, and in particular $\alpha\rightarrow 1$, as $t\rightarrow 0^+$, as required. Similarly, the function $f(t)$ (defined in \Eqref{Eq:GeneralSolution_A}) can be written as
		\begin{align}
		f(t)=-\int_{T}^t \frac{1}{\tau\left( 1-3\tau^2 V(\tau) \right)} d\tau=-\ln(t)-\int_{T}^{t}\tau^{\epsilon-1}g(\tau)d\tau
		\label{Eq:f_Expansion}
		\end{align}
		so that 
		\begin{align}
		t\exp({f(t)})=\exp\left( -\int_{T}^{t}\tau^{\epsilon-1}g(\tau)d\tau \right).
		\end{align}
		It therefore only remains to show that $\int_{T}^{t}\tau^{\epsilon-1}g(\tau)d\tau$ is bounded for all $t\in I$. For this, we note that, by assumption, the constants $M_1=\sup_{t\in I}g(t)$ and $M_2=\inf_{t\in I}g(t)$ are well-defined and finite on the interval $I$. Then
		\begin{align}
		\int_{T}^{t} \tau^{\epsilon -1}g(\tau)d\tau \le \int_{T}^{t} \tau^{\epsilon -1}M_{1}d\tau = \frac{M_1}{\epsilon}\left( t - T \right),
		\label{Eq:KasnerProof_sup}
		\end{align}
		and
		\begin{align}
		\int_{T}^{t} \tau^{\epsilon -1}g(\tau)d\tau \ge \int_{T}^{t} \tau^{\epsilon -1}M_{2}d\tau = \frac{M_2}{\epsilon}\left( t - T \right).
		\label{Eq:KasnerProof_inf}
		\end{align} 
		By taking the limit $t\rightarrow 0^+$ of \Eqsref{Eq:KasnerProof_sup} and \eqref{Eq:KasnerProof_inf} we find
		\begin{align}
		-\frac{M_2}{\epsilon}T\le\int_{T}^{0^+} \tau^{\epsilon -1}g(\tau)d\tau\le -\frac{M_1}{\epsilon}T,
		\label{g_Limit}
		\end{align}
		and hence we have that the constant $\C=\int_{T}^{0^+} \tau^{\epsilon -1}g(\tau)d\tau$ is finite, and in particular we have that 
		\begin{align}
		\lim_{t\rightarrow 0^+}t\, \text{e}^{f(t)}=\text{e}^{-\C}.
		\end{align}
		We therefore conclude the fields $(\gamma_{ab},{\chi^a}_b,\alpha,\phi,\nu,V(t))$ are asymptotically Kasner in the sense of \Defref{Def:AsymptoticallyKasner}. We now conclude our proof by calculating the remaining two limits.
		
		Inputting \Eqref{Eq:f_Expansion} into \Eqref{Eq:General_nu} gives
		\begin{align}
		\nu^{2}=\frac{2}{3t^2} \left( 1 - 3 t^2 V(t) \right) + \frac{1}{t^2}\text{exp}\left({ \int_T^t \tau^{\epsilon-1}g(\tau)d\tau }\right).
		\end{align}
		\Eqref{g_Limit} now tells us that if $t\rightarrow 0^+$ then
		\begin{align}
		\lim_{t\rightarrow 0^+}t^2\nu^2  =\frac{2}{3}+\text{e}^{-2\C}.
		\end{align}
		Thus, $t \nu\rightarrow A$ as $t\rightarrow 0^+$ for $A^2 =\text{e}^{-2\C} + 2/3$, as required. Similarly, by inputting \Eqref{Eq:f_Expansion} into \Eqref{Eq:GeneralSolution_A} we find
		\begin{align}
		{\chi^a}_b =  \frac{{C^a}_b}{t}\exp\left( - \int_{T}^t \tau^{\varepsilon-1}g(\tau)d\tau \right).
		\end{align}
		Using \Eqref{g_Limit} to calculate the limit $t\rightarrow0^+$ we find
		\begin{align}
		\lim_{t\rightarrow 0^+} t{\chi^a}_b = {C^a}_b \lim_{t\rightarrow 0^+}\exp\left( - \int_T^t \tau^{\varepsilon-1}g(\tau)d\tau\right)={\text{e}^{-\C}}{{C^a}_b}.
		\end{align}
		We therefore have that all of the limit conditions \Eqref{Eq:Def_Limits} are satisfied. This completes the proof.
	\end{proof}
	For the remainder of this work we refer to \Eqref{Eq:KasnerCondition} as the \textit{Kasner condition}. It is interesting to note that $V(t)$ can be a singular function of time, but only mildly so. 
	
	\Thmref{Result:Kanser} introduces a natural separation of the types of asymptotically Kasner solutions: Consider a spatially homogeneous solution $(\gamma_{ab},{\chi^a}_b,\phi,\nu,\alpha,V(t))$ defined on an interval $I = \left( 0, T \right]$ with $T>0$. Then, the scalar field is either (1) strictly monotonic for all $t\in I$; (2) eventually-monotonic; $\phi$ is not strictly monotonic on $I$ but there exists an interval $\left( 0, \tau \right]$ with $0<\tau<T$ such that $\partial_{t}\phi\ne 0$ for all $t\le\tau$ or, (3) asymptotically stationary; in which case we have $\partial_{t}\phi\rightarrow 0$ as $t\rightarrow 0^+$.

	\section{Some exact solutions}
	\label{Some_exact_solutions}
	In this section here we now provide to exact solutions of the Einstein scalar field equations \Eqsref{Eq:ADM_Homogenous_A}--\eqref{Eq:Lapse_Homogenous}. In particular, we provide two exact solutions: One that \emph{is} asymptotically Kasner, and one that \emph{is not}. Although these solutions are useful as explicit examples, it is not clear whether or not they have any physical relevance. We will not discuss this any further here.
	\subsection{An exact asymptotically Kasner solution with an unbounded pot\-ential}
	\label{SubSec:An_exact_Kasner-like_solution_with_a_singular_potential}
	It is remarkable consequence of \Thmref{Result:Kanser} that the potential $V(t)$ can be an unbounded function of time. The goal of the present subsection is to investigate solutions for which $V(t)$ becomes infinite in the limit $t\rightarrow 0^+$. In particular, we search for a strictly monotonic solution on the interval $I=\left( 0, T \right]$ with $T=1$. For this, we choose the simple function 
	\begin{align}
	V(t) = \frac{\V}{3t},
	\label{Eq:SimplePotential}
	\end{align}
	where $\mathcal{V}\in\mathbb{R}$ is a freely specifiable constant. For this particular choice of potential the Kasner condition (\Eqref{Eq:KasnerCondition}) is satisfied for any $\epsilon \in\left( 0,1 \right)$. The lapse is therefore 
	\begin{align}
	\alpha=\frac{1}{1 - \V t}.
	\label{Eq:Lapse_YaKasnerPot}
	\end{align}
	Notice that if we want to guarantee that the lapse is finite (and positive) for all $t\in I$ then we must make the restriction $\mathcal{V}< 1$. It is important to note here that, if $\V>0$, our time interval is always restricted to $t\in \left( 0,\V^{-1} \right)$. In particular we have that \Thmref{Result:Kanser} holds for all $T<\V^{-1}$ We therefore find that $t$ can be extended to $\infty$ if and only if $\V\le0$. Turning our attention to the extrinsic curvature equation \Eqref{Eq:ADM_Homogenous_A} we use \Eqref{Eq:GeneralSolution_A} to find,   
	\begin{align}
	\begin{split}
	{\chi^a}_{b}&=\exp\left( \ln \left({C^a}_b\right)+ \ln(1-\V) -\int_{1}^t \frac{1}{\tau(1-\mathcal{V}\tau)}d\tau \right) = \left( \frac{1}{t} - \mathcal{V} \right){\mathcal{C}^{a}}_{b},
	\end{split}
	\end{align}
	where ${\mathcal{C}^{a}}_{b}$ is a symmetric (i.e. $C_{ab}=C_{ba}$) tensor with constant entries subject to the requirement ${\mathcal{C}^{a}}_{a}=0$. Before proceeding we make the further simplification that the extrinsic curvature and metric are both diagonal. i.e. 
	\begin{align}
	{\chi^a}_b={\mathcal{C}^a}_b=0,\quad \text{and}\quad \gamma_{ab}=0\quad \text{if}\quad a\ne b.
	\end{align}
	We note that this is not a significant restriction as, for spatially homogeneous solutions, one can always perform a (local) coordinate transformation so that ${\chi^a}_{b}$ and $\gamma_{ab}$ are diagonal.
	
	Solving \Eqref{Eq:ADM_Homogenous_y} for the metric components now gives
	\begin{align}
	\gamma_{ab}=\frac{1}{\left( 1-\mathcal{V}t \right)^{2/3}} \text{diag}\left( \gamma_{1}t^{\frac{2}{3}-2{\mathcal{C}^1}_{1}}, \gamma_{2}t^{\frac{2}{3}-2{\mathcal{C}^2}_{2}}, \gamma_{3}t^{\frac{2}{3}-2{\mathcal{C}^3}_{3}} \right),
	\end{align}
	where $\gamma_{1},\gamma_{2}$ and $\gamma_{3}$ are integration constants. In principle these can be any constants, however this just corresponds to a scaling of the spatial coordinates. Notice that these solutions are very similar to the Kasner metric and in fact reduce to it if we pick $\mathcal{V}=0$. The resemblance becomes clearer if we set 
	\begin{align}
	{C^{a}}_{b}=\text{diag}\left(\frac{1}{3}-p_{1},\frac{1}{3}-p_{2},\frac{1}{3}-p_{3}\right) \implies \sum_{i=1}^{3}p_{i}=1,
	\end{align}
	for constants $p_{i}\in\mathbb{R}$.

	Before directly solving for the scalar field $\phi$ we must first calculate $\nu^2$. From \Eqref{Eq:General_nu}, we find
	\begin{align}
	\nu^{2}=\frac{1}{t^2 \alpha^{2}}\left(\frac{2}{3}\alpha - \left( \frac{2}{3}-A^2 \right) \right),
	\label{Eq:Nu_YaKasnerPot}
	\end{align}
	where $A\in\mathbb{R}$ is an integration constant and $\alpha$ is given by \Eqref{Eq:Lapse_YaKasnerPot}. The main thing to notice here is that, depending on the choice of $A$, it may be possible for $\nu^2$ to become negative. In order to avoid this, we must choose a region on which we want the solution to be defined. Since our primary interest is the approach to the singularity, we choose $t\in \left( 0, 1 \right]$. In the limit $t\rightarrow 0^+$ we find that the restriction $A^2\ge 0$ should be made. Furthermore, requiring the solution to be defined at $t=1$ gives 
	\begin{align}
	\begin{cases}
	1>\V > \frac{3A^2}{3A^2 - 2},\quad &A^2\ne 2/3,
	\\
	1>\V > -\infty,\quad &A^2= 2/3,
	\end{cases}
	\label{V_Limits}
	\end{align}
	The strict inequality in \Eqref{V_Limits} ensures that $\nu$ (and hence $\partial_{t}\phi$) never vanishes. We are now in a position to solve \Eqref{Eq:Evol_phi} for the scalar field $\phi$. Using \Eqref{Eq:General_nu} we find
	\begin{align}
	\phi=\begin{cases}
	B\,\ \;&\text{if}\; A=0,
	\\
	\phi_1 + \sqrt{\frac{8}{3}-4A^2} \arctan\left( \frac{\sqrt{3}\, \alpha t|\nu|}{\sqrt{2-3 A^2}} \right)+B\,\ \;&\text{if}\; A \in \left( 0, \sqrt{2/3} \right) ,
	\\
	\phi_{1}+ B\,\ \;&\text{if}\; A= \sqrt{2/3},
	\end{cases}
	\label{ScalarFieldEquation}
	\end{align}
	where $B\in \mathbb{R}$ is an integration constant and $\phi_1$ is defined as 
	\begin{align}
	\begin{split}
	\phi_{1}=&A\left(\mu\ln\left( \mu \sqrt{3}\left( t|\nu|\alpha-A\right) \right)- \ln\left( \sqrt{3}\left(t|\nu|\alpha + A\right) \right)\right),\quad \mu = \text{sign}\left( \V \right).
	\end{split}
	\end{align}
	Here $\alpha$ is given by \Eqref{Eq:Lapse_YaKasnerPot} and $\nu$ by \Eqref{Eq:Nu_YaKasnerPot}. Note that we have only considered the case $\nu>0$. The case where $\nu<0$ is recovered by the transformation $(\phi,\nu)\mapsto (-\phi,-\nu)$. 
	
	Finally, from \Thmref{Thm_1}, we find that the Hamiltonian constraint is satisfied if and only if the relation 
	\begin{align}
	\sum_{i=1}^{3}p_{i}^2=1-A^{2}
	\end{align}
	holds. Notice here that ${\chi^a}_{b}{\chi^b}_{a}\ge0 \implies  A^2\le 2/3$. Clearly the standard Kasner relations are satisfied. In particular this solution is asymptotically Kasner in the sense of \Defref{Def:AsymptoticallyKasner}. It follows then that this solution represents a generalisation of the standard Kasner scalar field solutions (with zero potential, discussed in \Sectionref{Sec:Kasner_solutions_with_a_scalar_field}) to include a potential. We interpret the quantity $A$ as (the analogue of) the scalar field strength since in the special case $\V=0$, $A$ is the scalar field strength, discussed in \Sectionref{Sec:Kasner_solutions_with_a_scalar_field}.
	
	In the special case $A^2=2/3$ we have $\nu^2 = \nu_{\star}^2$. Then, from discussions in \Sectionref{Sec:Asymptotically_Kasner_solutions_in_Bianchi_I_cosmologies}, it follows that $A^2=2/3$ corresponds to a FLRW solution with a non-zero potential. In this situation, it is possible to pick any $\V< 1$. Another interesting case is $A^2=0$, which corresponds to the vacuum case. From \Eqref{ScalarFieldEquation} we see that, if $A=0$, we have $\phi=constant$ and hence $V(\phi(t))=constant$. It therefore follows from \Eqref{Eq:SimplePotential} that we must choose $\V =0$. This makes sense: If the scalar field $\phi$ does not vary in time then the potential $V(\phi)$ must also remain constant. 
	\begin{figure}
		\centering
		\includegraphics[width=0.5\linewidth]{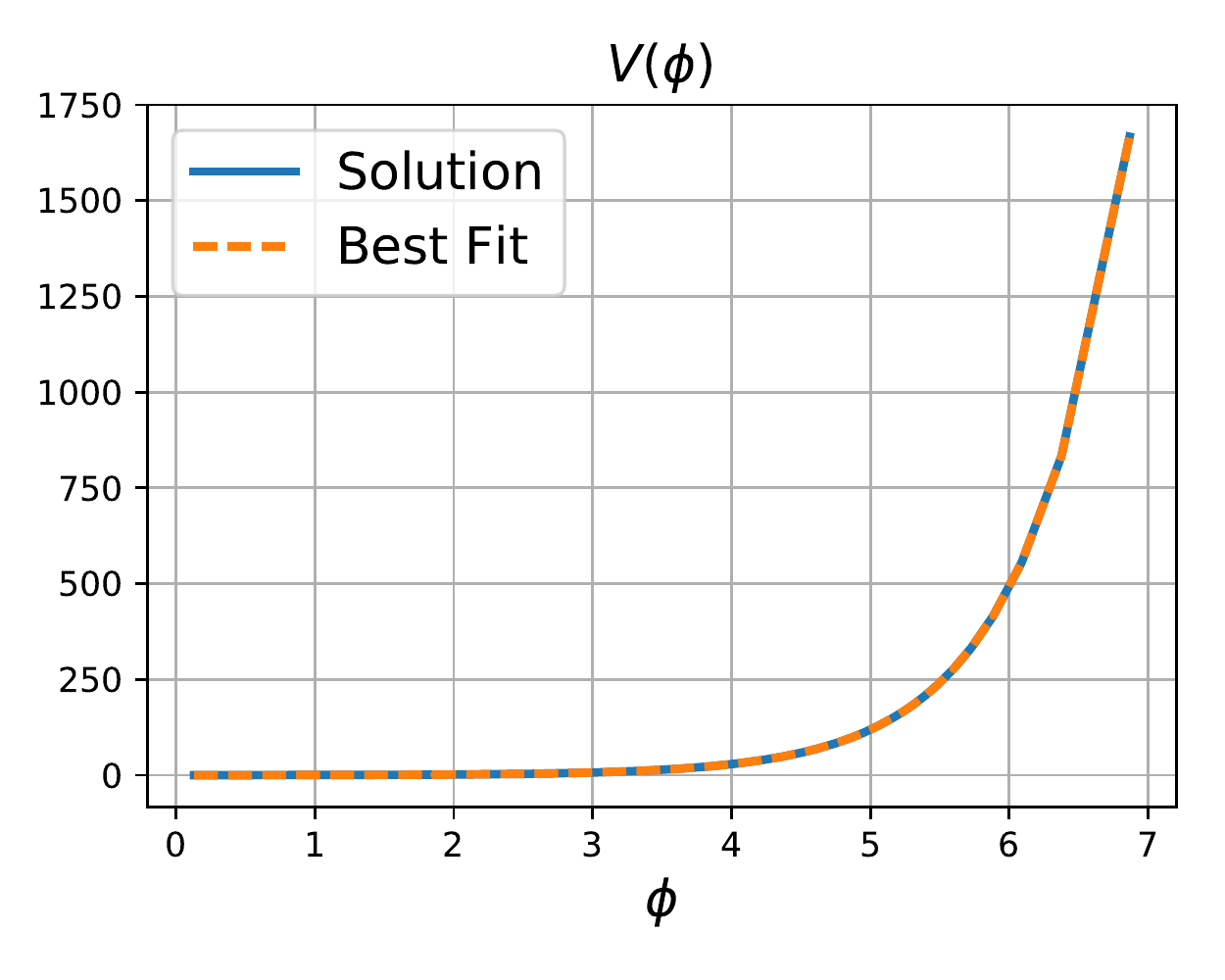}
		\caption{The potential \Eqref{Eq:SimplePotential} as a function of the scalar field (given by \Eqref{ScalarFieldEquation}) for the case $\V = 1/2, A=\sqrt{1/3}$, and $B=0$. Here ``Best Fit'' is given by the first three terms in \Eqref{Eq:SeriesV_Exact}.}
		\label{fig:exactsol}
	\end{figure}
	
	We should remark that although we claim this is a new solution, it is not entirely clear that this is the case. Papers on the subject of Bianchi I solutions with a scalar field seem to always assume a simple form of the potential $V$ as a function of the scalar field. In our case, however, we have a simple functional dependence of $V$ on \textit{time} rather than the scalar field. If we take \Eqref{ScalarFieldEquation} as an implicit formula for $t$ as a function of $\phi$, then $V$ would become a rather involved function of $\phi$. Hence it is likely, although not guaranteed, that this is indeed is a new solution. We can always take \Eqref{ScalarFieldEquation} as an implicit formula for $t$ as a function of $\phi$ as the monotonicity of $\phi$ ensures that the implicit function theorem can always be applied. We now calculate the series expansion of $\phi$ about $t=0$ and invert the series to get $t(\phi)$. Substituting into \Eqref{Eq:SimplePotential} gives
	\begin{align}
	V(\phi)=\frac{\V}{3 A^2} + \frac{\V}{3}e^{ -\frac{\phi-B}{A}} + \frac{(2A^2 -1)\V^3}{4A^4}e^{\frac{\phi-B}{A}}+ O\left( \left(e^{\frac{\phi-B}{A}}\right)^2 \right).
	\label{Eq:SeriesV_Exact}
	\end{align}
	In \Figref{fig:exactsol} we show a plot of $V(t)$ as a function of $\phi(t)$ for the case $\V = 1/2, A=\sqrt{1/3}$ and $B=0$. In the \Figref{fig:exactsol} we also plot the first three non-zero terms given by the series expansion \Eqref{Eq:SeriesV_Exact}.

	\subsection{An exact solution that is not asymptotically Kasner}
	\Thmref{Result:Kanser} gives a clear and easily checkable condition on the potential $V(t)$ that allows one to determine whether or not a (spatially homogeneous) solution is asymptotically Kasner. The goal of the present subsection is to investigate what happens when the Kasner condition \Eqref{Eq:KasnerCondition} does \emph{not} hold. In particular, we search for a strictly monotonic solution (that is not asymptotically Kasner) on the interval $I=\left( 0, T\right]$ with $T=1$. For this, we choose the function
	\begin{align}
	V(t)=\frac{\V-1}{3 \V t^2},\quad \V\ne 0,
	\label{Eq:Potential_NotKasner}
	\end{align}
	where $\V \in \mathbb{R}$ is a freely specifiable (non-zero) constant. It is clear here that $t^2V(t)\rightarrow (\V-1)/(3\V)$ in the limit $t\rightarrow0^+$ and hence \Eqref{Eq:Potential_NotKasner} does \emph{not} satisfy the Kasner condition \Eqref{Eq:KasnerCondition}. We further note that there is no $t\in I$ (or $\V\in\mathbb{R}$) such that $V(t)=1/(3 t^2)$. Using \Eqref{Eq:Lapse_Homogenous} to calculate the lapse $\alpha$ gives
	\begin{align}
	\alpha = \V.
	\label{Eq:Lapse_NoKasnerPot}
	\end{align}
	Requiring that the lapse $\alpha$ is a positive function in the interval $I$ now gives the restriction $\V>0$. It is clear here that if $\V\ne 1$ then the solution is not asymptotically Kasner. Turning our attention to the extrinsic curvature equation \Eqref{Eq:ADM_Homogenous_A} we use \Eqref{Eq:GeneralSolution_A} to find,   
	\begin{align}
	{\chi^a}_{b} &=\exp\left(-\int_{1}^t \frac{\V}{\tau}d\tau + \ln \left( \frac{1}{\sqrt{\V}} {C^a}_b\right) \right)= \frac{{C^a}_b}{\sqrt{\V}t^{\V}},
	\label{Eq:A_NotKasner}
	\end{align}
	where\footnote{Note that the factor of $1/\sqrt{\V}$ has been introduced for later convenience.}  ${\mathcal{C}^{a}}_{b}$ is a symmetric (i.e. $C_{ab}=C_{ba}$) tensor with constant entries subject to the requirement ${\mathcal{C}^{a}}_{a}=0$. Before proceeding we make the further simplification that the extrinsic curvature and metric are both diagonal, i.e,
	\begin{align}
	{\mathcal{C}^a}_b=\text{diag}(\C_1,\C_2,\C_3),\quad \text{and}\quad \gamma_{ab}=0\quad \text{if}\quad a\ne b.
	\end{align}
	As with the previous solution, we point out that this is not a significant restriction as, for spatially homogeneous solutions, one can always perform a (local) coordinate transformation so that ${\chi^a}_{b}$ and $\gamma_{ab}$ are diagonal.

	Inputting \Eqref{Eq:A_NotKasner} into \Eqref{Eq:Evol_y} and solving for the metric components gives
	\begin{align}
	\gamma_{ab}= t^{ \frac{2\V}{3}} \text{diag}\left( \gamma_{1}\exp\left( \frac{2 \sqrt{\V} \C_1  }{(\V -1)t^{\V-1}} \right), \gamma_{2}\exp\left( \frac{2 \sqrt{\V} \C_2  }{(\V -1)t^{\V-1}} \right), \gamma_{3}\exp\left( \frac{2 \sqrt{\V} \C_3  }{(\V -1)t^{\V-1}} \right) \right),
	\label{Eq:ExactMetric_NotKasner}
	\end{align}
	where $\gamma_{1},\gamma_{2}$, and $\gamma_{3}$ are integration constants. In principle these can be any constants, however this just corresponds to a scaling of the spatial coordinates. Notice that in \Eqref{Eq:ExactMetric_NotKasner} we have implicitly restricted our attention to the case $\V\ne 1$. The special case $\V=1$ returns the standard Kasner-scalar field solutions, with zero potential, discussed in \Sectionref{Sec:Kasner_solutions_with_a_scalar_field}.

	We now turn our attention to the scalar field $\phi$. In order to calculate $\phi$ we must first solve for the quantity $\nu^2$. From \Eqref{Eq:General_nu} we find 
	\begin{align}
	\nu^2 = \frac{2}{3 t^2 \V} - \frac{\C}{\V t^{2\V}},
	\end{align}
	where $\C\in\mathbb{R}$ is an integration constant\footnote{The factor of $1/\V$ has been introduced for later convenience.}. In order to ensure that $\nu^2$ is strictly positive on the interval $I = \left( 0, 1 \right]$ we require that the inequality
	\begin{align}
	\frac{2}{3 t^2}-\frac{\C}{t^{2\V}}> 0
	\label{Eq:NotKasner_inequality}
	\end{align}
	holds for all $t\in\left( 0, 1 \right]$. Using \Eqref{Eq:ADM_Homogenous_phi} we now calculate the scalar field as 
	\begin{align}
	\phi=
	\begin{cases}
	B + \sqrt{\frac{2}{3\V}}\ln (t), &\text{if} \quad \C =0,
	\\
	B +\frac{1}{\sqrt{3\V}\left( \V -1 \right)}\left( \phi_1 - \sqrt{ 2 - 3\C t^{ 2( 1 - \V ) } } \right), &\text{if} \quad \C \ne 0, 
	\end{cases}
	\label{ScalarFieldEquation_NotKasner}
	\end{align}
	where, $B\in \mathbb{R}$ is an integration constant and, for the sake of readability, we have defined $\phi_1$ as
	\begin{align}
	\phi_1 = \frac{1}{\sqrt{8}}\left( \ln\left( \sqrt{2}-\sqrt{2-3 \C t^{2(1-\V)}} \right) + \ln\left( \sqrt{2}+\sqrt{2-3 \C t^{2(1-\V)}} \right)  \right).
	\end{align}
	Note here that we have only considered the case $\nu>0$. The case where $\nu<0$ is recovered by the transformation $(\phi,\nu)\mapsto (-\phi,-\nu)$.

	Finally, by appealing to \Thmref{Thm_1}, we note that the Hamiltonian constraint \Eqref{Eq:ADM_Homogenous_Hamiltonian} is satisfied if and only if the algebraic relation 
	\begin{align}
	\C_1^2+\C_2^2+\C_3^2 - {\C} =0 
	\label{Eq:NotKanser_Hamiltonian}
	\end{align}
	holds. Notice that it is possible for \Eqref{Eq:NotKanser_Hamiltonian} to hold only if $\C\ge 0$. Given this, we use the inequality \Eqref{Eq:NotKasner_inequality} to impose a restriction on the possible values of $\V$.

	\begin{figure}[t!]
		\centering
		\includegraphics[width=0.5\linewidth]{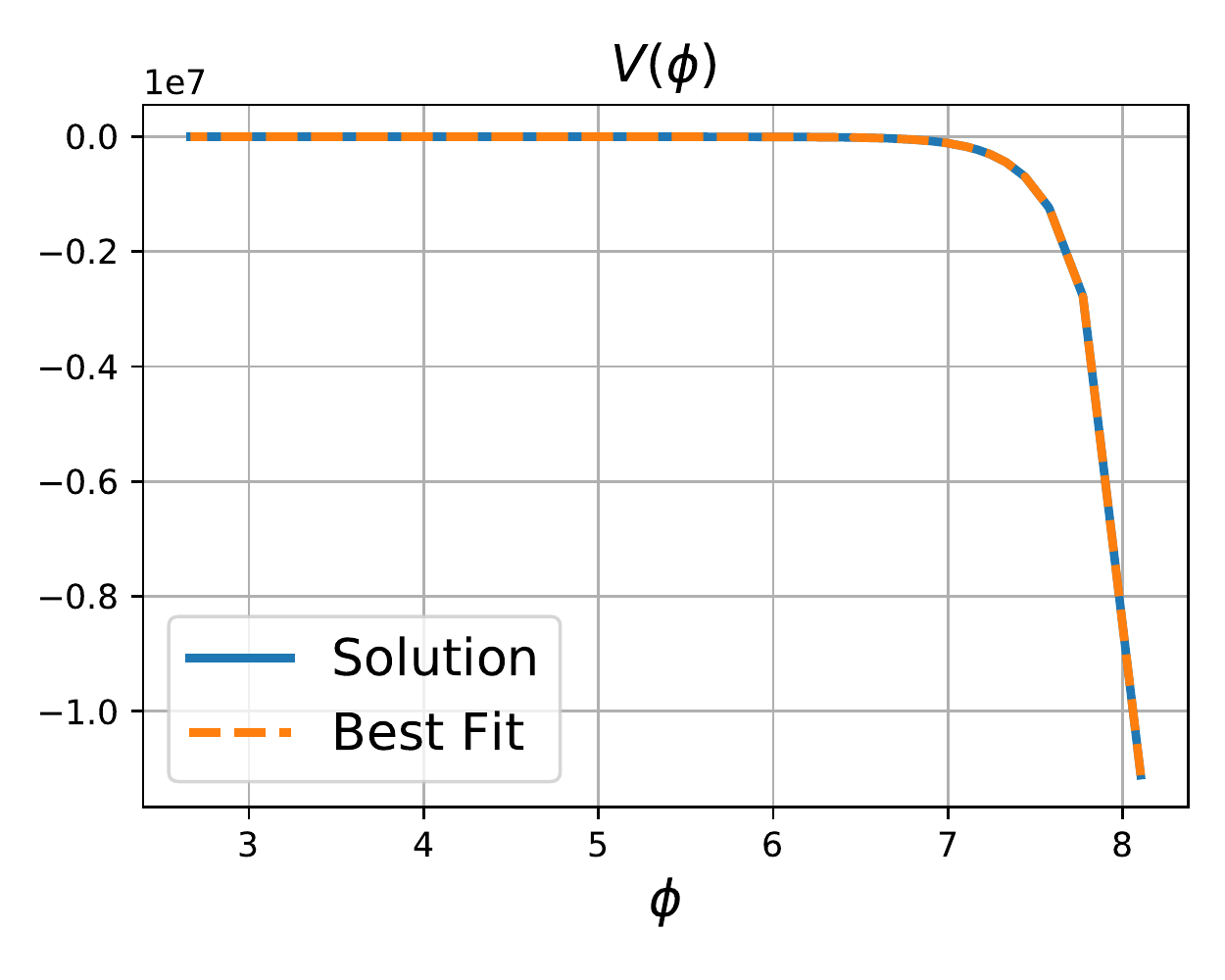}
		\caption{The potential \Eqref{Eq:Potential_NotKasner} as a function of the scalar field (given by \Eqref{ScalarFieldEquation_NotKasner}) for the case $\V = 3/4, \C = {1/3}$ and $B=0$. Here, ``Best Fit'' is given by the first three terms in \Eqref{Eq:V_Expand_notKasner}.}
		\label{fig:exactsolnotkasner}
	\end{figure}
	If the inequality \Eqref{Eq:NotKasner_inequality} holds at $t=1$, we find that $\C < 2/3$. It follows then that the largest $\C$ can be is $\C\approx 2/3$. In this extremal case we further find that \Eqref{Eq:NotKasner_inequality} holds if and only if $\V \in \left( 0, 1 \right)$. We therefore have the following requirements:
	\begin{align}
	\begin{cases}
	0< \C < \frac{2}{3},
	\\
	0<\V <1.
	\end{cases}
	\label{Eq:Inequalities_NotKasner}
	\end{align}
	In the special case $\C =0$ we find that $\V>0$ is the only restriction we need to make on $\V$. Note that, unlike the solution presented in \Sectionref{SubSec:An_exact_Kasner-like_solution_with_a_singular_potential} there is no value of $\C$ that would return the vacuum Kasner scalar field solutions with zero potential, presented in \Sectionref{Sec:Kasner_solutions_with_a_scalar_field}. This is a direct consequence of the exclusion of the $\V=1$ case.

	Provided the inequalities \Eqref{Eq:Inequalities_NotKasner} hold, we have that $\partial_{t}\phi\ne 0$ for all $t\in I$. In particular, the implicit function theorem implies that it is possible to (at least locally) invert $\phi(t)$ to find $t(\phi)$ and hence $V(\phi)=V(t(\phi))$. We now calculate the series expansion of $\phi$ about $t=0$ and invert the series to get $t(\phi)$. Substituting into \Eqref{Eq:Potential_NotKasner} now gives
	\begin{align}
	V(\phi)=3^{\frac{1}{1-\V}}\text{e}^{\frac{4}{\V-1}}\left( 1 - \frac{\text{e}^{{w(\phi)}}}{\V-1} + \frac{\text{e}^{8}\left( 7\V - 3 \right)\text{e}^{{2w(\phi)}}}{8(\V-1)^2} + O\left( \left(\text{e}^{{w(\phi)}} \right)^3 \right) \right)\text{e}^{ \frac{w(\phi)}{\V-1} }, 
	\label{Eq:V_Expand_notKasner}
	\end{align}
	where we have defined 
	\begin{align*}
	w(\phi)=2\sqrt{6\V}(\V-1)\left( \phi - B \right).
	\end{align*}
	In \Figref{fig:exactsolnotkasner} we show a plot of $V(t)$ as a function of $\phi(t)$ for the case $\V = 3/4, \C = {1/3}$ and $B=0$. In the \Figref{fig:exactsolnotkasner} we also plot the first three non-zero terms given by the series expansion \Eqref{Eq:V_Expand_notKasner}.

	\section{Numerical examples}
	\label{Numerical_examples}
	\subsection{Numerical set-up}
	\subsubsection{Potential and initial conditions}
	In previous sections we have provided an analytical theory that describes the effect a potential $V(t)$ has on Bianchi I solutions with a strictly monotonic scalar field $\phi$. The goal of the present section is to numerically extend these results and to provide some (numerical) examples (and expansion formulas) of the various types of spatially homogeneous space-times, discussed at the end of \Sectionref{Sec:Asymptotically_Kasner_solutions_in_Bianchi_I_cosmologies}. Namely, we provide numerical examples for which the scalar field $\phi$ is (1) strictly monotonic, (2) asymptotically stationary, and (3) eventually-monotonic. To this end we seek to solve the Einstein scalar field equations \Eqsref{Eq:ADM_Homogenous_A}--\eqref{Eq:Lapse_Homogenous} on the interval $I=\left( 0, 1 \right]$. Although our Python code allows for an arbitrary choice of potential $V(\phi)$, we restrict our attention to the choice 
	\begin{align}
	V(\phi) = m\left( \cosh\left( k\phi \right) - 1 \right),
	\label{Eq:CoshPotential}
	\end{align}
	where $m,k\in\mathbb{R}$ are positive freely specifiable constants. At this stage it is not clear that \Eqref{Eq:CoshPotential} allows for each of the various types of space-times. However, we find that this potential does allow for each type of solution. Observe carefully that, unlike \Sectionref{BianchiI}, we specify the potential $V(\phi)$ in terms of the \emph{scalar field} $\phi$ and \emph{not as a function of time}.
	
	Let us now discuss how we construct our initial data. Recall first that an initial data set $(\gamma_{ab},K_{ab})$ must satisfy the Hamiltonian constraint \Eqref{Eq:ADM_Homogenous_Hamiltonian} at the initial time. For this we first pick a \emph{background initial data set} $(\mathring{\gamma}_{ab},\mathring{\chi}^a{}_b,\mathring{\phi})$ satisfying the Hamiltonian constraint. Here, we pick the background to be an exact Kasner scalar field solution (see \Sectionref{Sec:Kasner_solutions_with_a_scalar_field}). At the initial time $t=1$ we set
	\begin{align}
	\gamma_{ab}(t=1)= \mathring{\gamma}_{ab},
	\quad
	{\chi}^a{}_b(t=1)=\mathring{\chi}^a{}_b,
	\quad
	\phi(t=1)=0.
	\label{Eq:Cosh_ID}
	\end{align} 
	We find that this choice of initial data ensures that the Hamiltonian constraint \Eqref{Eq:ADM_Homogenous_Hamiltonian} is satisfied initially provided $\nu(t=1)=\mathring{A}$, where $\mathring{A}$ is the scalar field strength associated with the background data set $(\mathring{\gamma}_{ab},{\mathring{A}^a}{}_b,\mathring{\phi})$.  Given this initial data we then numerical evolve the unknowns towards the singularity at $t=0$. For our time-stepping method we use the adaptive \emph{SciPy} integrator \emph{odeint}\footnote{See \url{https://docs.scipy.org/doc/scipy/reference/generated/scipy.integrate.odeint.html}.}. We denote its absolute error as $\tilde{\mathcal{E}}$ and its magnitude as $\tilde{\varepsilon}=-\log(\tilde{\mathcal{E}})$. It is clear then that $\tilde{\varepsilon}$ controls the local step size of the time evolution and is the primary source of error in our numerical scheme, and should decrease monotonically with $\tilde{\varepsilon}$. Throughout our evolution we use the relative constraint violation 
	\begin{align}
	C = \left| t^2 \chi^{a}{}_{b}\chi_{a}{}^{b} + t^2 \nu^2 + 2t^2 V(\phi) -\frac{2}{3}  \right|
	\end{align}
	as a gauge of our numerical error.

	\subsubsection{Perturbation expansions}
	Recall now that at $t=1$ we have $\phi=0$. In particular this means that $V(\phi)$, as given by \Eqref{Eq:CoshPotential}, vanishes initially. With this in mind, we interpret the solutions we construct here as spatially homogeneous perturbations of the exact Kasner scalar field solutions with zero potential $(\mathring{\gamma}_{ab},{\mathring{A}^a}{}_b,\mathring{\phi})$. It therefore makes sense to consider perturbation expansions of $\phi$ with respect to the parameters $m$ and $k$. In particular we expect that if $k$ (or $m$) is `\emph{small}' then we can write $\phi$ as a series of the form 
	\begin{align}
	\phi = \phi_{(0)} + \sum_{i=1}^{\infty}\vartheta^{i}\phi_{(i)},
	\label{Eq:PerturbationExpansion_Generic}
	\end{align}
	where we have that either $\vartheta=k$ or $\vartheta =m$ and the $\phi_{(i)}$'s are unknown functions of time. From \Eqref{Eq:Cosh_ID} we find that the fields $\phi_{(i)}$ satisfy the initial conditions 
	\begin{align}
	\phi_{(0)}(1)=\phi_{(i)}(1)=0, \quad \partial_{t}\phi_{(0)}(1)=\mathring{A},\quad \partial_t \phi_{(i)}(1)=0,\quad i=1,\dots, \infty.
	\label{Eq:ID_PerturbExpand}
	\end{align}
	Inputting the expansion \Eqref{Eq:PerturbationExpansion_Generic} into \Eqref{Eq:ADM_Homogenous_phi} we find that, irrespective of $\vartheta=k,m$, the function $\phi_{(0)}$ must satisfy the differential equation 
	\begin{align}
	\partial_{t}^2 \phi_{(0)}+\frac{1}{t}\partial_t \phi_{(0)} =0, \quad \phi_{(0)}(1)=0, \quad \partial_t \phi_{(0)}(1)=\mathring{A}.
	\end{align}
	It follows then that 
	\begin{align}
	\phi_{(0)} = \mathring{A}\ln(t).
	\label{Eq:PerturbationExpansion_phi0}
	\end{align} 
	Generically, one expects each of the $\phi_{(i)}$'s to become infinite in the limit $t\rightarrow 0^+$, and hence the term $\vartheta^i \phi_{(i)}$ is not `small', for $t$ sufficiently close to $t=0$. It is therefore worth pointing out that when we say $\vartheta$ is \emph{small} we mean that $|\vartheta^i \phi_{(i)}/\phi_{(0)}| \ll 1$ for all $t\in \left( 0, 1\right)$. Of course, as a matter of principle, one cannot a priori determine the validity of such an assumption. Instead, one must first determine the function $\phi_{(i)}$ and then check for consistency. Explicit formulas for the remaining $\phi_{(i)}$'s are given in the following sections.

	Finally, we end this subsection by introducing the function $\K(t)$ as 
	\begin{align}
	{\K(t)}=t^2\nu^2 + t^2 {\chi^a}_{b}{\chi^b}_{a}- \frac{2}{3}.
	\label{Eq:K_Def}
	\end{align} 
	We find that if $\K(t)\rightarrow 1$ as $t\rightarrow 0^+$ then the corresponding solution is asymptotically Kasner in the sense of \Defref{Def:AsymptoticallyKasner}. We therefore refer to $\K(t)$ as \emph{`the Kasner constraint'} and $|\K(t)-1|$ as the \emph{`violation of the Kasner constraint'}. Note here that we say a (spatially homogenous) solution is asymptotically Kasner if the final value of $|\K-1|$ is smaller that the constraint violation.

	\subsection[Expansions in $m$ and their numerical justifications]{Perturbation expansions in $m$ and their numerical justifications}
	\label{SubSec:Solutions_that_are_not_asymptotically_Kasner}
	In this subsection we consider the behaviour of the scalar field $\phi$, corresponding to the potential \Eqref{Eq:CoshPotential}, when the parameter $m$ is small. It is clear from \Eqref{Eq:CoshPotential} that $m$ only affects the relative size of the potential $V(\phi)$. As such, we do not expect $m$ to have any impact on the generic time dependence of the potential $V(\phi(t))$. It therefore makes sense to establish the general behaviour of the scalar field by considering a perturbation expansion of $\phi$ in $m$. i.e. we assume that the scalar field $\phi$ has the asymptotic form
	\begin{align}
	\phi= \mathring{A}\ln(t)+m\phi_{(1)} + O\left(m^2\right),
	\label{Eq:PerturbationExpansion_m}
	\end{align}
	where $\phi_{(1)}$ is a time-dependent function. Observe that \Eqref{Eq:PerturbationExpansion_m} is consistent with \Eqref{Eq:PerturbationExpansion_Generic} and \Eqref{Eq:PerturbationExpansion_phi0} for $\vartheta=m$. Inputting the expansion \Eqref{Eq:PerturbationExpansion_m} into the (spatially homogeneous) scalar field equation \Eqref{Eq:ADM_Homogenous_phi} and using that the initial conditions are given by \Eqref{Eq:ID_PerturbExpand}, we find that $\phi_{(1)}$ must satisfy the differential equation
	\begin{align}
	\begin{cases}
	\partial_{t}^2 \phi_{(1)}+\frac{1}{t}\partial_t \phi_{(1)}  =-\frac{3\mathring{A}}{2}\left( t^{k\mathring{A}} + t^{-k\mathring{A}} \right)-\frac{(3\mathring{A}^2 - 1)k}{2}\left( t^{k\mathring{A}} - t^{-k\mathring{A}} \right), 
	\\
	\phi_{(1)}(1)=0,\quad \partial_{t}\phi_{(1)}(1)=0,
	\end{cases}
	\end{align}
	and therefore has solution
	\begin{align}
	\phi_{(1)}=
	\begin{cases}
	\frac{\mathring{A} (k^2 ( 12 \mathring{A}^2 + 3 \mathring{A}^4 k^2 - 16) - 3 (\mathring{A}^2 k^2 -4)^2 t^2 - 
		2 (3 \mathring{A}^2-2) (\mathring{A}^2 k^2 -4) \ln(t))k^2 + {\varphi}_{(1)}}{4 (\mathring{A}^2 k^2 - 4)^2}, &k\ne {2}/{\mathring{A}},
	\\
	\frac{2 - 2 t^4 + 3 \mathring{A}^2 (5 - 8 t^2 + 3 t^4) + 
		4 \ln(t) (2 + 3 \mathring{A}^2 + (4 - 6 \mathring{A}^2) \ln(t))}{32 \mathring{A}},  &k={2}/{\mathring{A}},
	\end{cases}
	\label{Eq:phi1_mExapansion}
	\end{align}
	where
	\begin{align}
	\varphi_{(1)}=  2 ( \mathring{A} k -2)^2 (3 \mathring{A} (1 + \mathring{A} k)-k) t^{2 + \mathring{A} k} -2 (2 + \mathring{A} k)^2 (3 \mathring{A} (\mathring{A} k -1) -k) t^{2 - \mathring{A} k}.
	\end{align}
	Using \Eqref{Eq:PerturbationExpansion_m} and \Eqref{Eq:phi1_mExapansion} we can now calculate the leading order behaviour of the potential \Eqref{Eq:CoshPotential} as a function of time. We find,
	\begin{align}
	V(\phi(t))=\frac{1}{2}\left( t^{\mathring{A} k} + t^{-\mathring{A} k} - 2 \right)m + O(m^2).
	\label{Eq:PotentialExpand_m}
	\end{align}
	From this expansion one might conclude that \emph{$t^2 V(\phi)\rightarrow 0$ as $t\rightarrow 0^+$ if and only if $| \mathring{A} k |< 2$}. If the inequality $| \mathring{A} k |< 2$ is satisfied then we find that \Thmref{Result:Kanser} holds for any $\epsilon\in ( 0, 2-|\mathring{A}k| )$. It is worth noting here that the assumption that $m$ is small (as was discussed in \Sectionref{SubSec:Potential_initial_conditions_and_perturbation_expansions}) holds if $mk^2\ll1$.
	
	\begin{figure}
		\centering
		\includegraphics[width=1.0\linewidth]{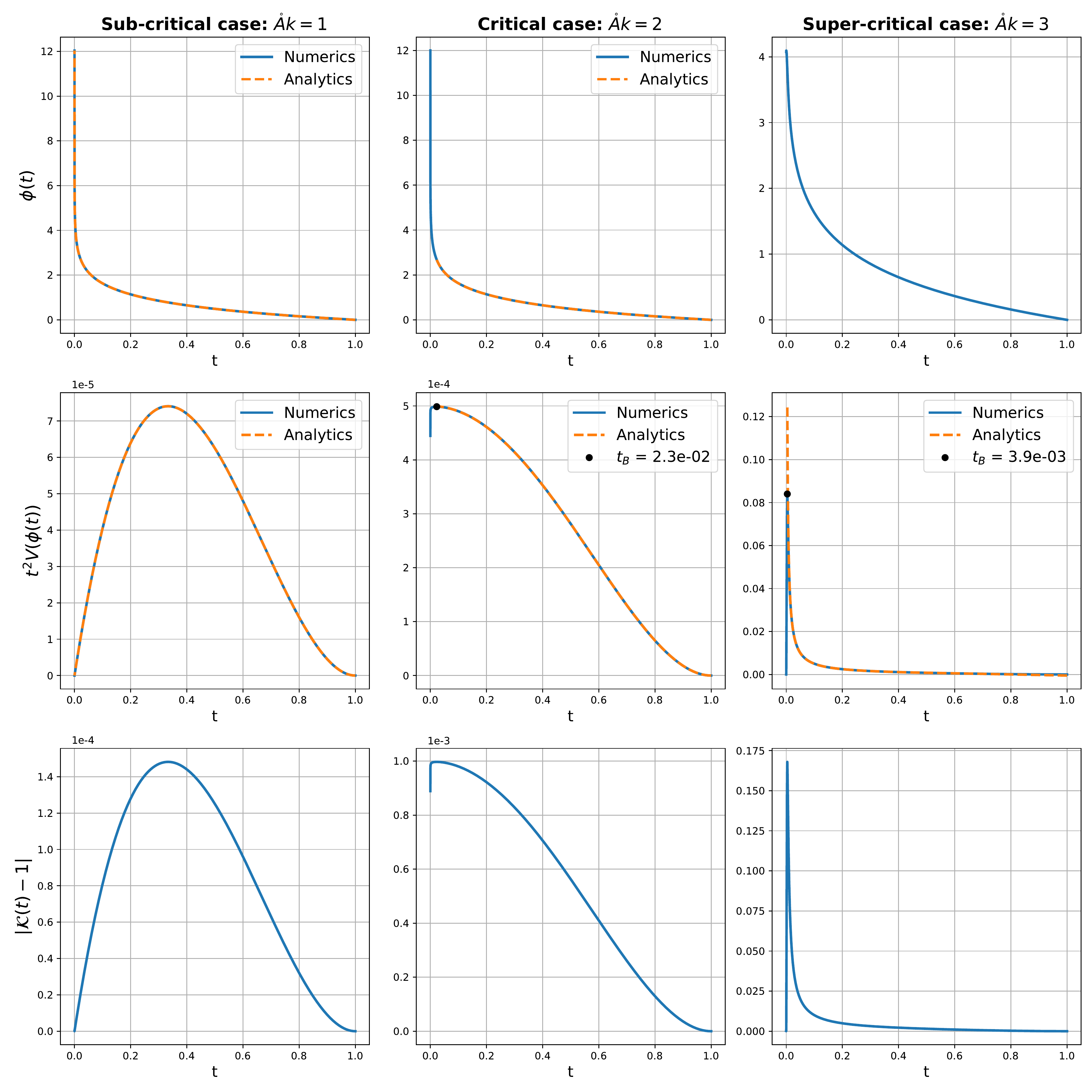}
		\caption{The numerical solutions corresponding to $m=10^{-3},\theta=-\pi/3$ and $\psi=5\pi/3$ so that $\mathring{A}=1/\sqrt{2}$. Each of the three columns correspond to the choices $k=\sqrt{2},2\sqrt{2}$ and $3\sqrt{2}$, respectively.}
		\label{fig:mexpansiontests}
	\end{figure}
	\begin{figure}[t!]
		\centering
		\includegraphics[width=0.4\linewidth]{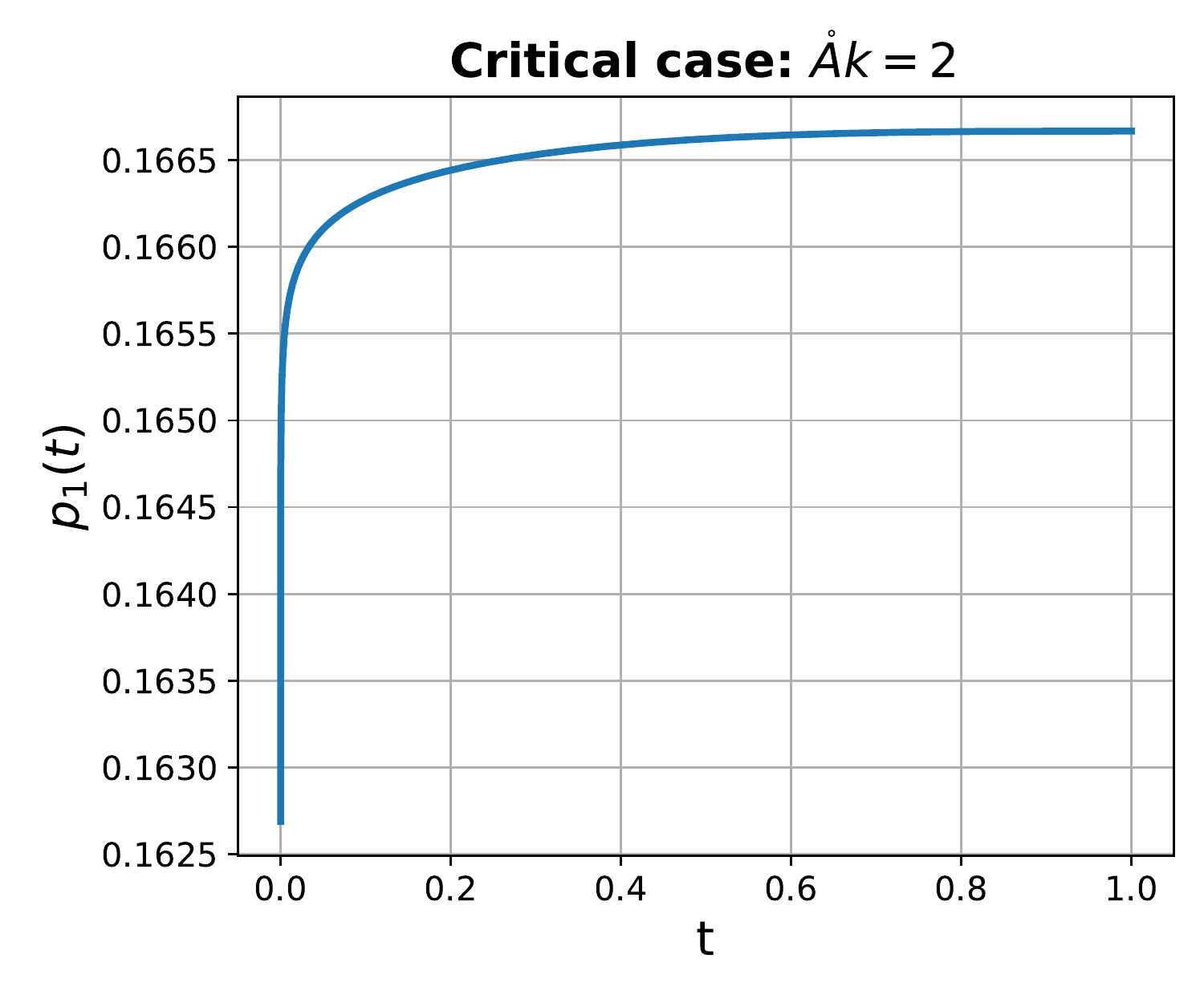}
		\includegraphics[width=0.4\linewidth]{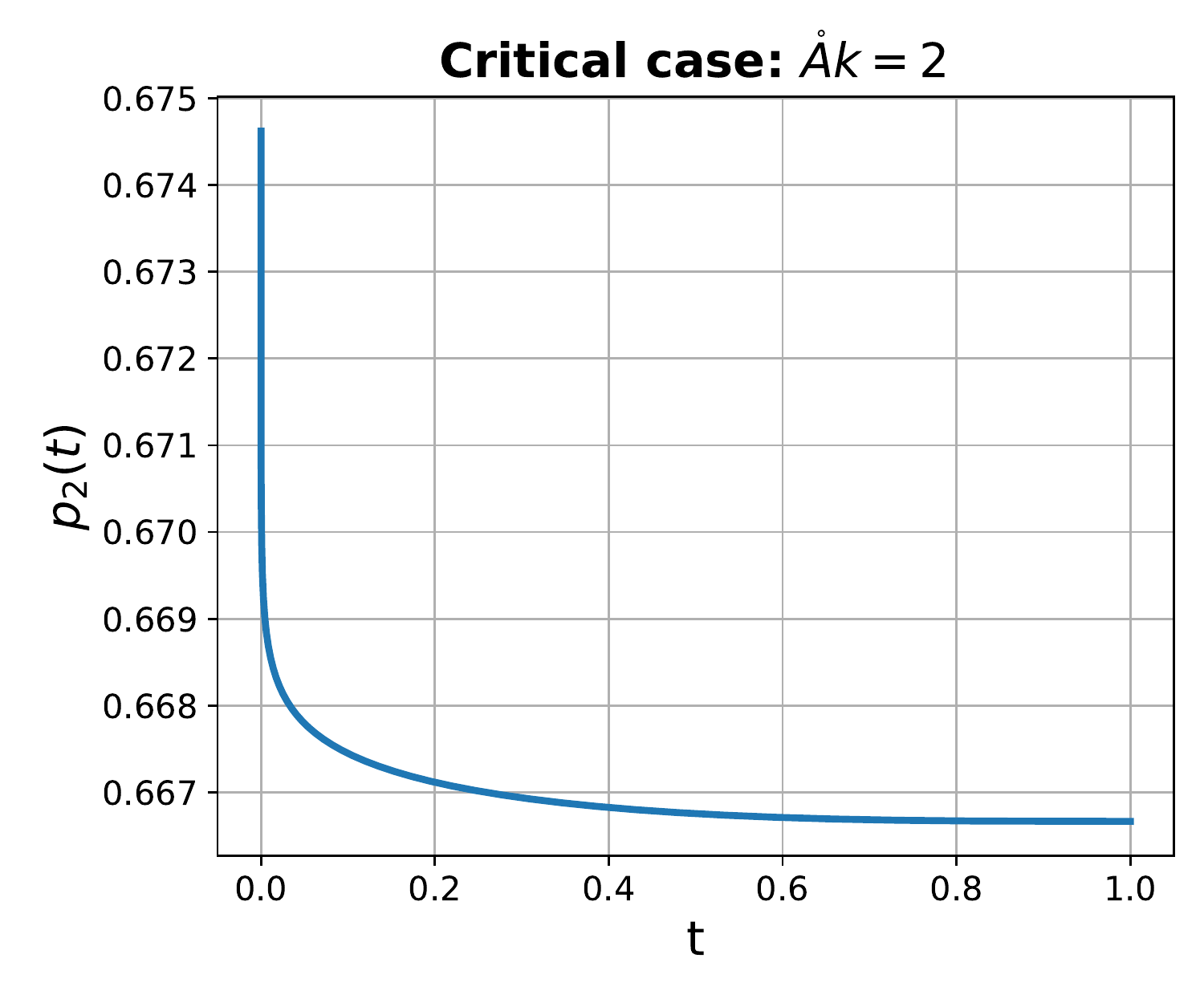} \\
		\includegraphics[width=0.4\linewidth]{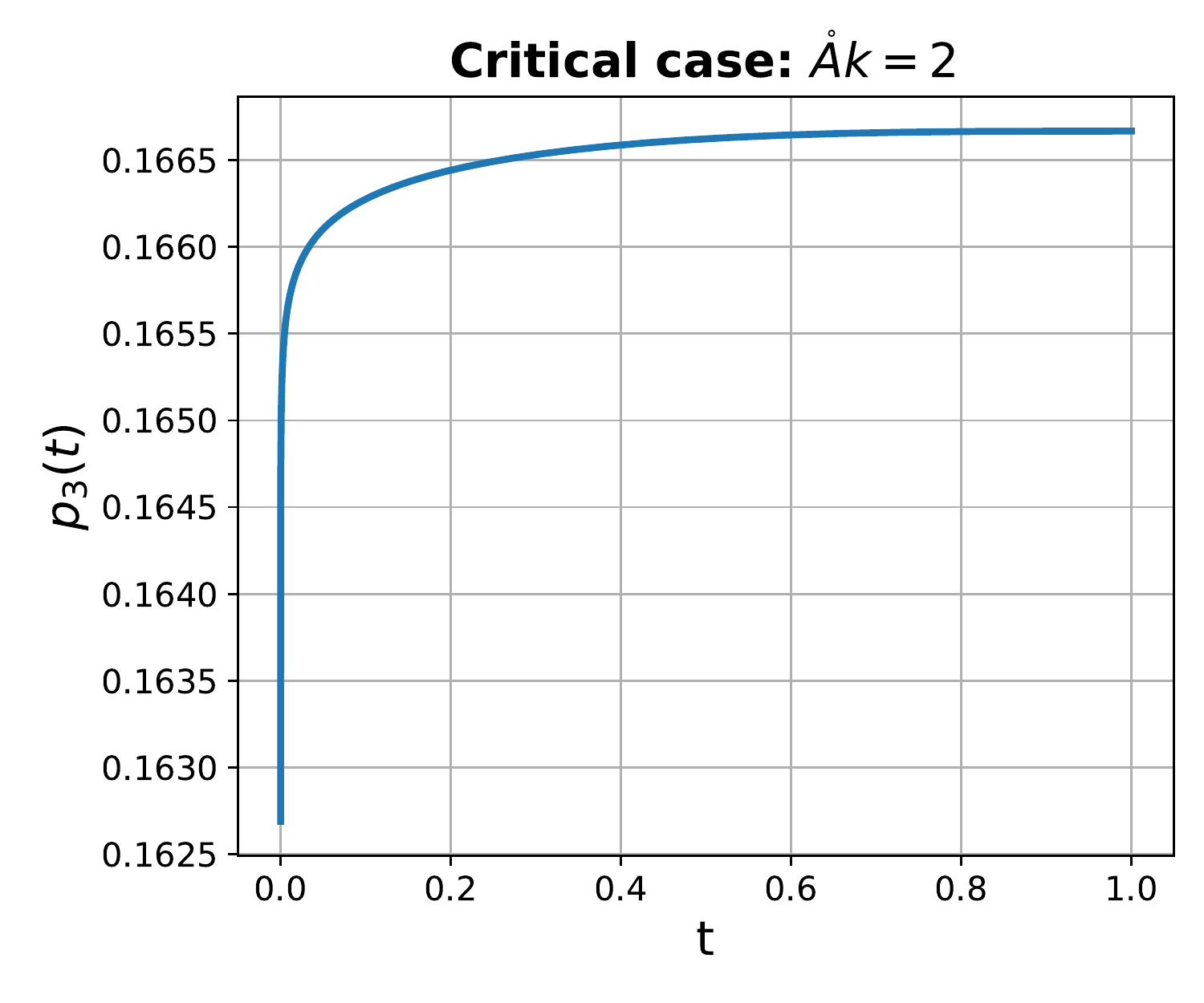}
		\includegraphics[width=0.4\linewidth]{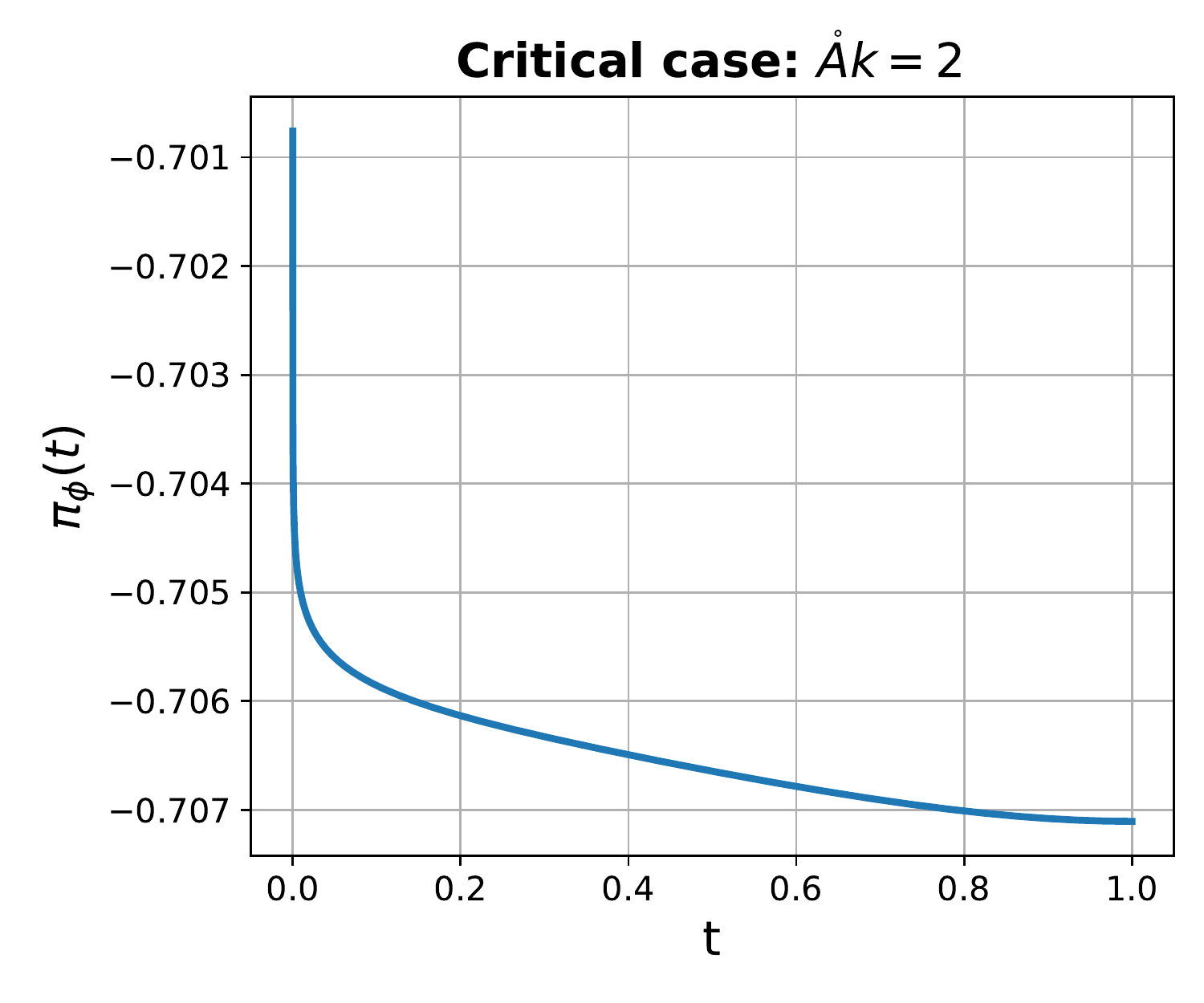}
		\caption{The numerically calculated Kasner exponents and scalar field strength correspond\-ing to $m=10^{-3},\theta=-\pi/3,\psi=5\pi/3$ (so that $\mathring{A}=1/\sqrt{2}$) and $k=2/\mathring{A}$. }
		\label{fig:p1bouncecrt}
	\end{figure}
	\begin{figure}[t!]
		\centering
		\includegraphics[width=0.4\linewidth]{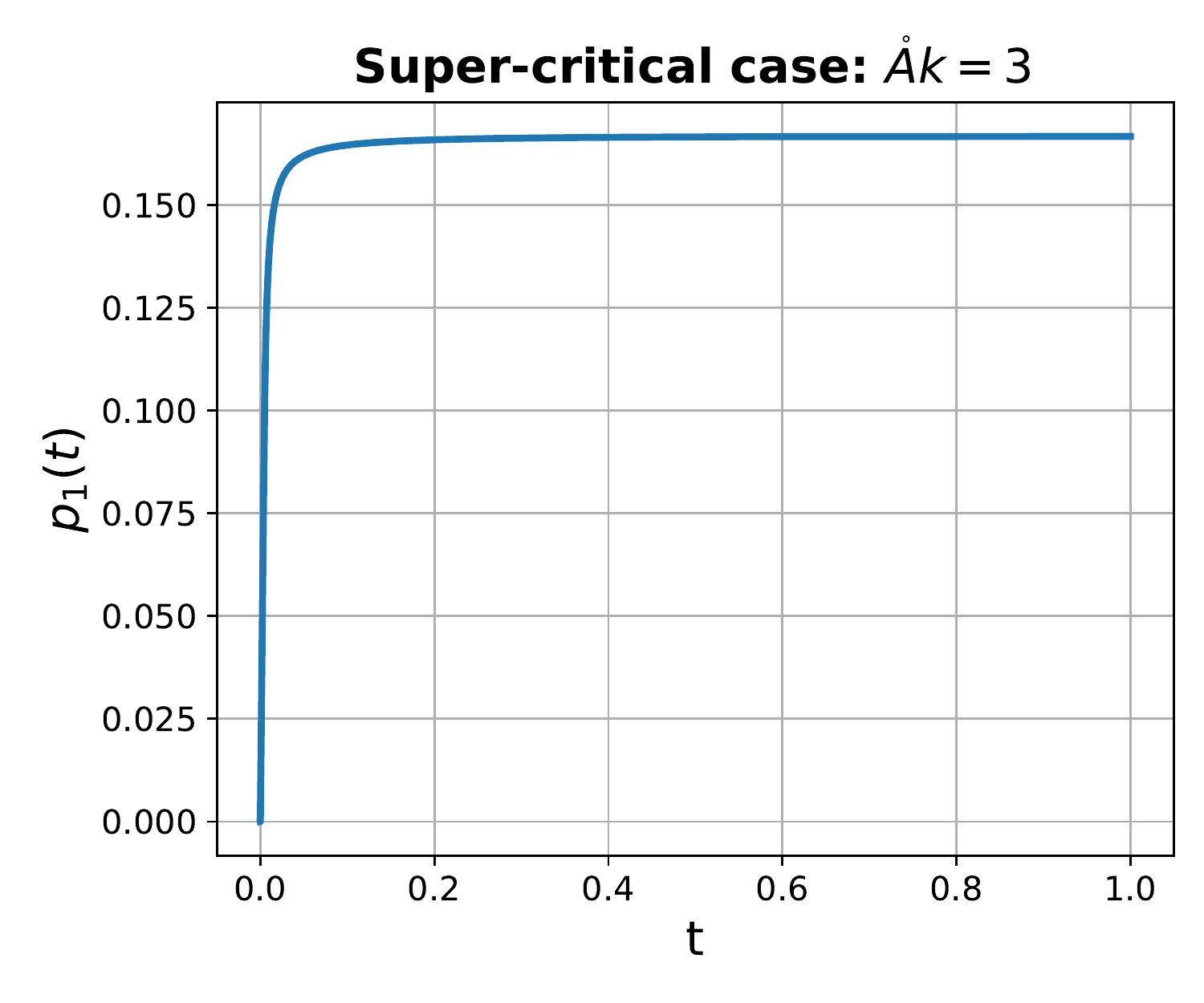}
		\includegraphics[width=0.4\linewidth]{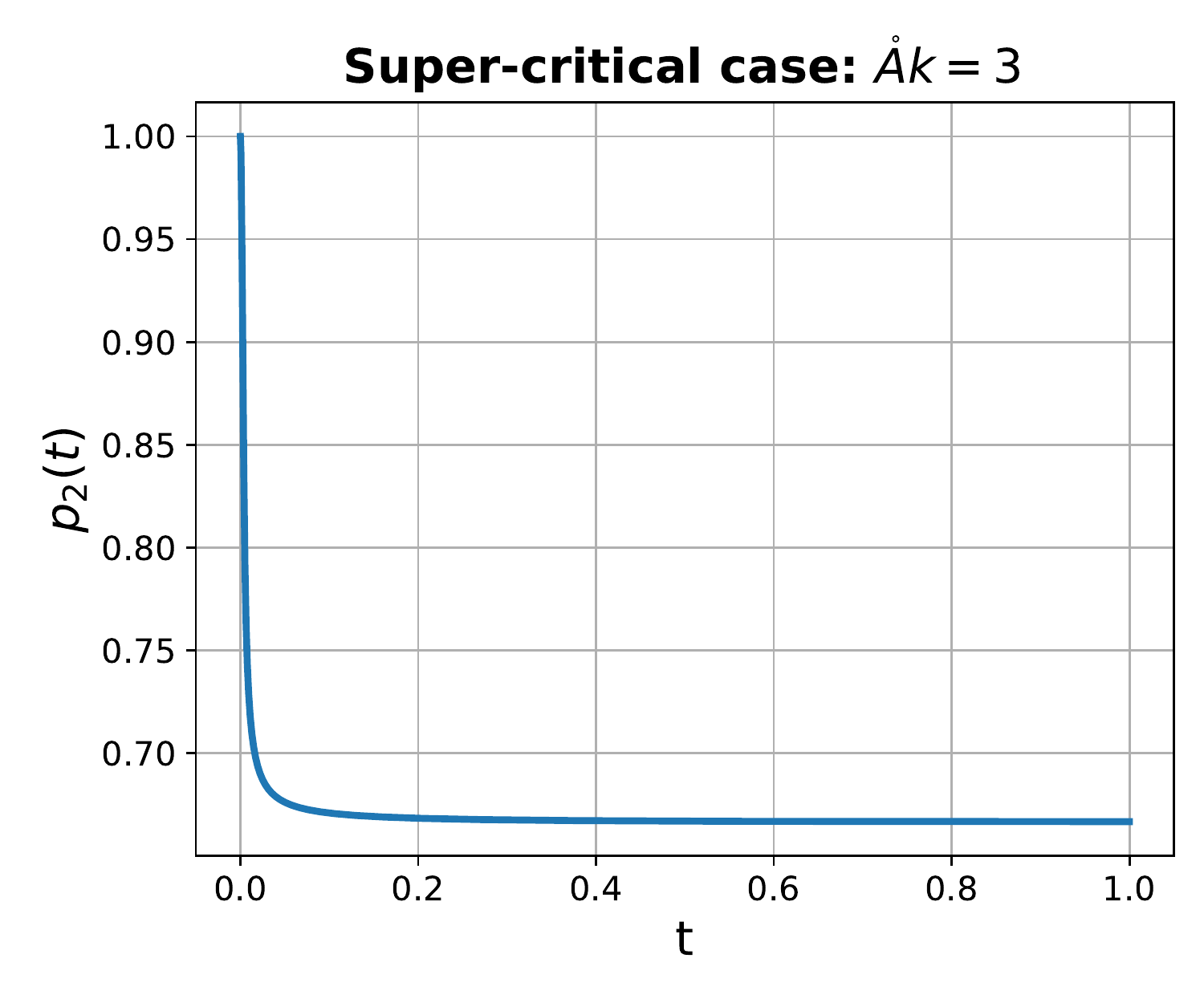} \\
		\includegraphics[width=0.4\linewidth]{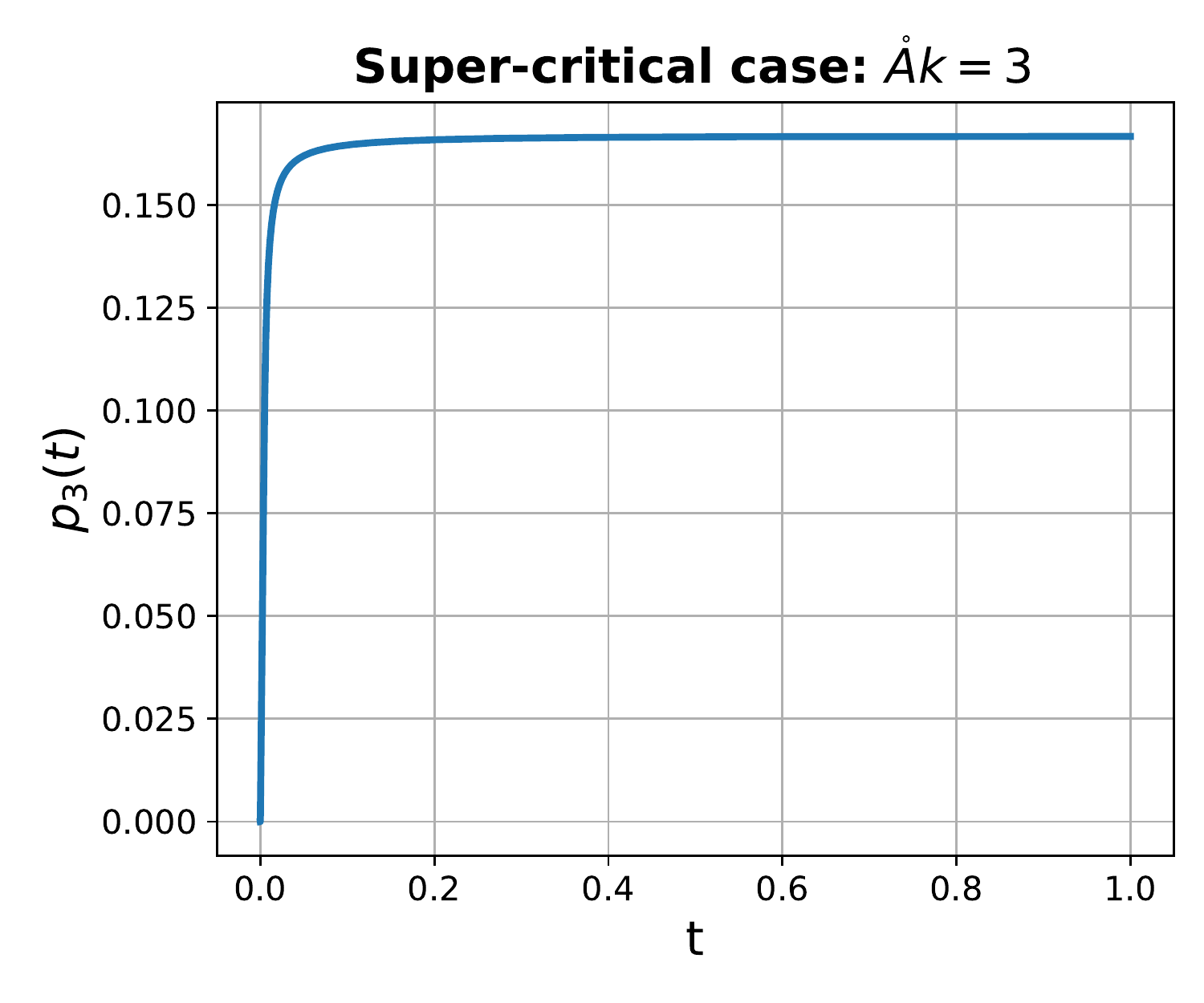}
		\includegraphics[width=0.4\linewidth]{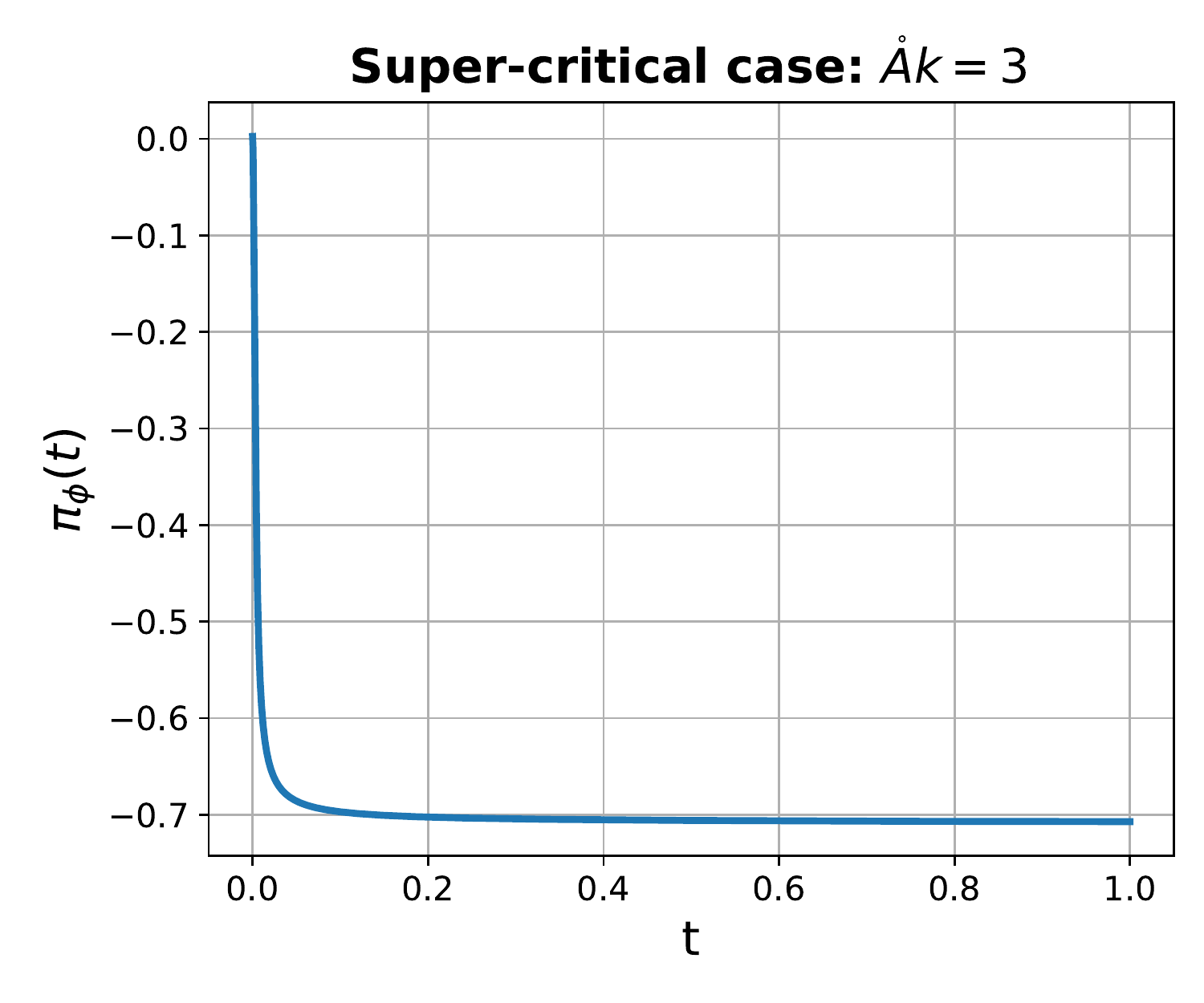}
		\caption{The numerically calculated Kasner exponents and scalar field strength correspond\-ing to $m=10^{-3},\theta=-\pi/3,\psi=5\pi/3$ (so that $\mathring{A}=1/\sqrt{2}$) and $k=3/\mathring{A}$. }
		\label{fig:p1bouncesup}
	\end{figure}

	We now provide numerical support for the expansions \Eqref{Eq:PerturbationExpansion_m}--\eqref{Eq:PotentialExpand_m}. In \Figref{fig:mexpansiontests} we show the results for three of our numerical simulations in which the value of $k$ changes but all other parameters are kept fixed. For each of the plots in \Figref{fig:mexpansiontests} we set $m=10^{-3},\theta=-\pi/3$ and $\psi=5\pi/3$. In the first column, we show the solutions corresponding to the choice $k=\sqrt{2}$ so that $|\mathring{A}k|=1$. In the second column, we show the solutions corresponding to the choice $k=2\sqrt{2}$ so that $|\mathring{A}k|=2$ and in the third column, we show the solutions corresponding to the choice $k=3\sqrt{2}$ so that $|\mathring{A}k|=3$. In the top row of \Figref{fig:mexpansiontests} we show the scalar field $\phi$. In the second and third rows of \Figref{fig:mexpansiontests} we show the quantities $t^{2} V(\phi(t))$ and $|\K(t)-1|$, respectively. For the `sub-critical' case, for which we have $\mathring{A}k=1$, we see that our analytical predictions closely match our numerical results. However, in the `critical' and `super-critical' cases we find that our analytical predictions \emph{do not} match our numerical simulations. In the critical case we find that the numerical solutions closely match the analytical predictions for $t\in [t_B,1]$. However, at $t=t_B\approx 2.3\times 10^{-2}$ the solution \emph{bounces} to a space-time that \emph{is} asymptotically Kasner. This behaviour becomes more extreme in the super-critical case where we see, in the centre right plot of \Figref{fig:mexpansiontests}, that although the analytical predictions do match the numerical results for a short while, by the time of the \emph{bounce} at $t=t_B\approx 3.9\times 10^{-3}$ the analytical prediction for $t^2 V(\phi(t))$ is significantly larger than the numerically calculated value of $t^2 V(\phi(t))$. It is not entirely surprising that this bounce type behaviour is not predicted by \Eqref{Eq:PerturbationExpansion_m}--\eqref{Eq:PotentialExpand_m} as this type of bouncing phenomena is a highly non-linear process and hence cannot be approximated by the linearisation. Observe carefully that this bounce is clearly seen in the ``super-critical case''. However, in the ``critical case'' we only see the beginning of the bounce. It is therefore not clear that this solution does indeed bounce to a solution that is asymptotically Kasner. The only way to demonstrate that the solution does indeed become asymptotically Kasner would be to re-perform the simulations closer to $t=0$. However, we find that our code is not able to get closer to the singularity than $t=4\times 10^{-8}$.  
	
	Finally, in \Figref{fig:p1bouncecrt} we show the Kasner coefficients $p_{1}(t),p_{2}(t),p_{3}(t)$ and the conjugate momentum of the scalar field\footnote{Note that the scalar field strength $A$ is calculated as $\lim_{t\rightarrow 0^{+}}\pi_{\phi}\phi(t)$} $\pi_{\phi}(t):=t\partial_{t}\phi/\alpha$, in both the critical and super-critical cases. In each of the plots in \Figref{fig:p1bouncecrt} we see that the quantities are approximately constant before \emph{bouncing} to a different value.

	\subsection[Expansions in $k$ and their numerical justifications]{Perturbation expansions in $k$ and their numerical justifications}
	\label{SubSec:Perturbation_expansions_in_k}
	\subsubsection{Taylor expansions in $k$}
	The goal of the present subsection is to provide a detailed description of solutions for which the scalar field $\phi$ is a strictly monotonic function, and for which the Kasner condition \Eqref{Eq:KasnerCondition} holds. We saw in the previous subsection that solutions (corresponding to the potential \Eqref{Eq:CoshPotential}) were asymptotically Kasner if and only if $|\mathring{A}k|<2$. Clearly, this inequality always holds if $k$ is small (since, from \Sectionref{Sec:Kasner_solutions_with_a_scalar_field}, we know that $|\mathring{A}|<\sqrt{2/3}$). In such a setting, one expects the scalar field $\phi$ to have an asymptotic expansion of the form
	\begin{align}
	\phi=\mathring{A}\ln(t) + \phi_{(1)}k + \phi_{(2)}k^2 + \phi_{(3)}k^3 + O(k^4), 
	\label{Eq:PerturbationExpansion_k}
	\end{align}
	where $\phi_{(1)},\phi_{(2)},\phi_{(3)}$ are time-dependent functions. Observe that \Eqref{Eq:PerturbationExpansion_k} is consistent with \Eqref{Eq:PerturbationExpansion_Generic} and \Eqref{Eq:PerturbationExpansion_phi0} for $\vartheta=k$. Inputting \Eqref{Eq:PerturbationExpansion_k} into the scalar field equation \Eqref{Eq:ADM_Homogenous_phi}, we find that if $i$ is an odd number then $\phi_{(i)}$ is a solution of the equation
	\begin{align}
	\frac{d^2 \phi_{(i)}}{dt^2}+\frac{1}{t}\frac{d\phi_{(i)}}{dt}=0, \quad \phi_{(i)}(1)=0,\quad \frac{d\phi_{(i)}}{dt}(1)=0.
	\end{align}
	It follows then that $\phi_{(i)}=0$ if $i$ is odd. In particular we have $\phi_{(1)}=\phi_{(3)}=0$ and hence $\phi_{(2)}$ is the only non-zero function in \Eqref{Eq:PerturbationExpansion_k}. It now follows from \Eqref{Eq:ADM_Homogenous_phi} that $\phi_{(2)}$ is a solution of the differential equation 
	\begin{align}
	\partial_t^{2} {\phi_{(2)}}+\frac{1}{t}\partial_{t}{\phi_{(2)}}=\frac{1}{2} \mathring{A} m \ln (t) \left(3 \mathring{A}^2 \ln (t)+6 \mathring{A}^2-2\right), \;\; \phi_{(2)}(1)=0, \;\; \partial_t {\phi_{(2)}}(1)=0,
	\end{align}
	and is therefore
	\begin{align}
	\phi_{(2)}=\frac{1}{16} \mathring{A} m \left(\left(4-3 \mathring{A}^2\right) \left(t^2-1\right)+2 \ln (t) \left(3 \mathring{A}^2 t^2 \ln (t)+3 \mathring{A}^2-2 \left(t^2+1\right)\right)\right).
	\label{Eq:phi2_kExapansion}
	\end{align}
	Observe here that the assumption that $k$ is small holds if $mk^2 \ll 1$. If $mk^2 \ll 1$ does not hold then it is possible that the scalar field $\phi$ is not a strictly monotonic function. Using \Eqref{Eq:PerturbationExpansion_k} and \Eqref{Eq:phi2_kExapansion} we can now calculate the leading order behaviour of the potential \Eqref{Eq:CoshPotential} as a function of time. We find,
	\begin{align}
	V(\phi(t))=\frac{1}{2}\mathring{A}^2 m \ln(t)^2 k^2+ O(k^4).
	\label{Eq:PotentialExpand_k}
	\end{align} 
	It is clear from \Eqref{Eq:PotentialExpand_k} (provided $mk^2\ll1$) that $t^2V(\phi)\rightarrow 0$ in the limit $t\rightarrow 0^+$ and in particular we find that \Thmref{Result:Kanser} holds for any $\epsilon\in (0,2)$, as was expected. 
	\begin{figure}[t]
		\centering
		\includegraphics[width=1.0\linewidth]{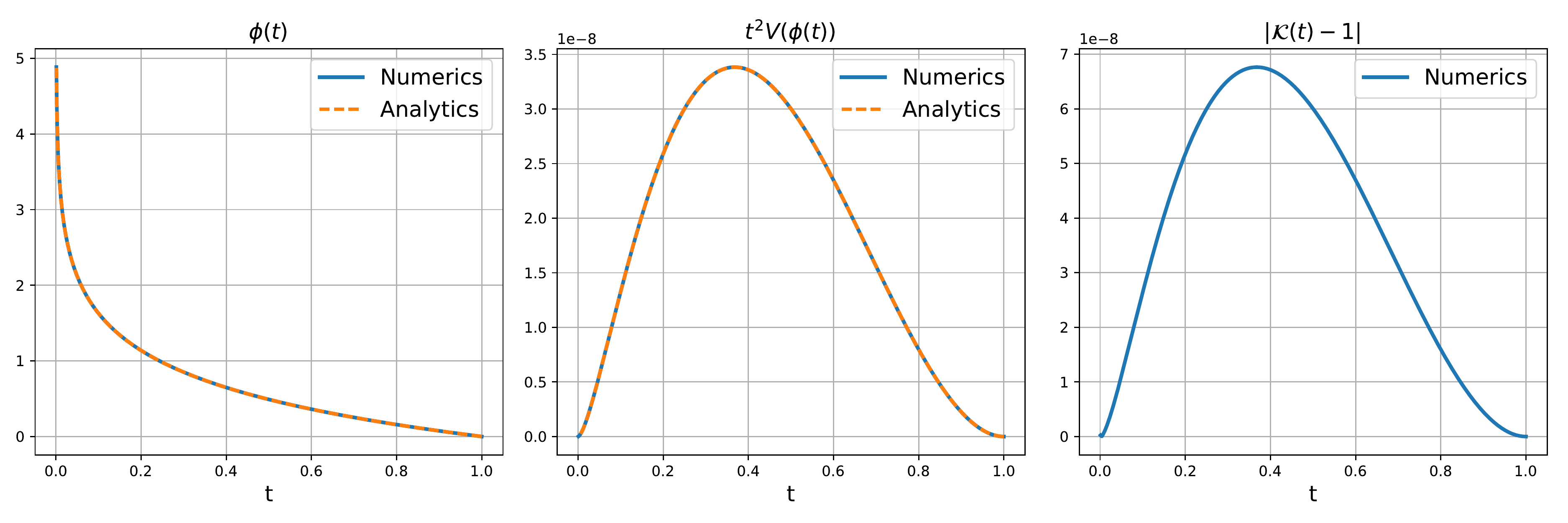}
		\caption{Scalar field solutions with potential as in \Eqref{Eq:CoshPotential}. Here we have set $m=1,\theta=-\pi/3,\psi=5\pi/3$ and $k=10^{-3}$. The left most graph shows the scalar field $\phi$, the centre graph shows $t^2 V(\phi)$, and the right most graph shows the violation of the Kasner constraint $|\mathcal{K}(t)-1|$.}
		\label{fig:kexpansiontests}
	\end{figure}

	In \Figref{fig:kexpansiontests} we test the validity of our approximations for $k=10^{-3},\theta=-\pi/3,\psi=5\pi/3$ and $m=1$. The left plot in \Figref{fig:kexpansiontests} shows the scalar field $\phi$, the centre plot shows the quantity $t^2 V(\phi)$, and the right most plot shows the violation of the Kasner constraint $|\mathcal{K}(t)-1|$. In the first two plots of \Figref{fig:kexpansiontests} we see that the analytical solutions closely agree with the numerical solutions.

	Given all this, one may ask \emph{how do we calculate the asymptotic scalar field strength $A$?} Analytically, we do this by considering the conjugate momentum of the scalar field $\pi_{\phi}:=t\nu$. 
	\begin{align}
	\pi_{\phi} = \mathring{A} + \frac{\mathring{A}}{8}\left( \left( 2 - 3 \mathring{A}^2 \right)(t^2 - 1) + 2t^{2}\left( 3\mathring{A}^2 -2 + 3 \mathring{A}^2 \ln(t) \right)\ln(t) \right)mk^{2} + O(k^4)
	\end{align}
	One now calculates the final value of the scalar field strength $A$ by considering the limit of $\pi_{\phi}$ as $t\rightarrow 0^+$:
	\begin{align}
	A = \lim_{t\rightarrow 0^+}\pi_{\phi}=\mathring{A}+\frac{1}{8}\mathring{A}\left( 3 \mathring{A}^2 -2 \right)m k^2 + O(k^4).
	\label{Eq:Af_Small_k}
	\end{align} 
	\begin{figure}
		\centering
		\includegraphics[width=0.8\linewidth]{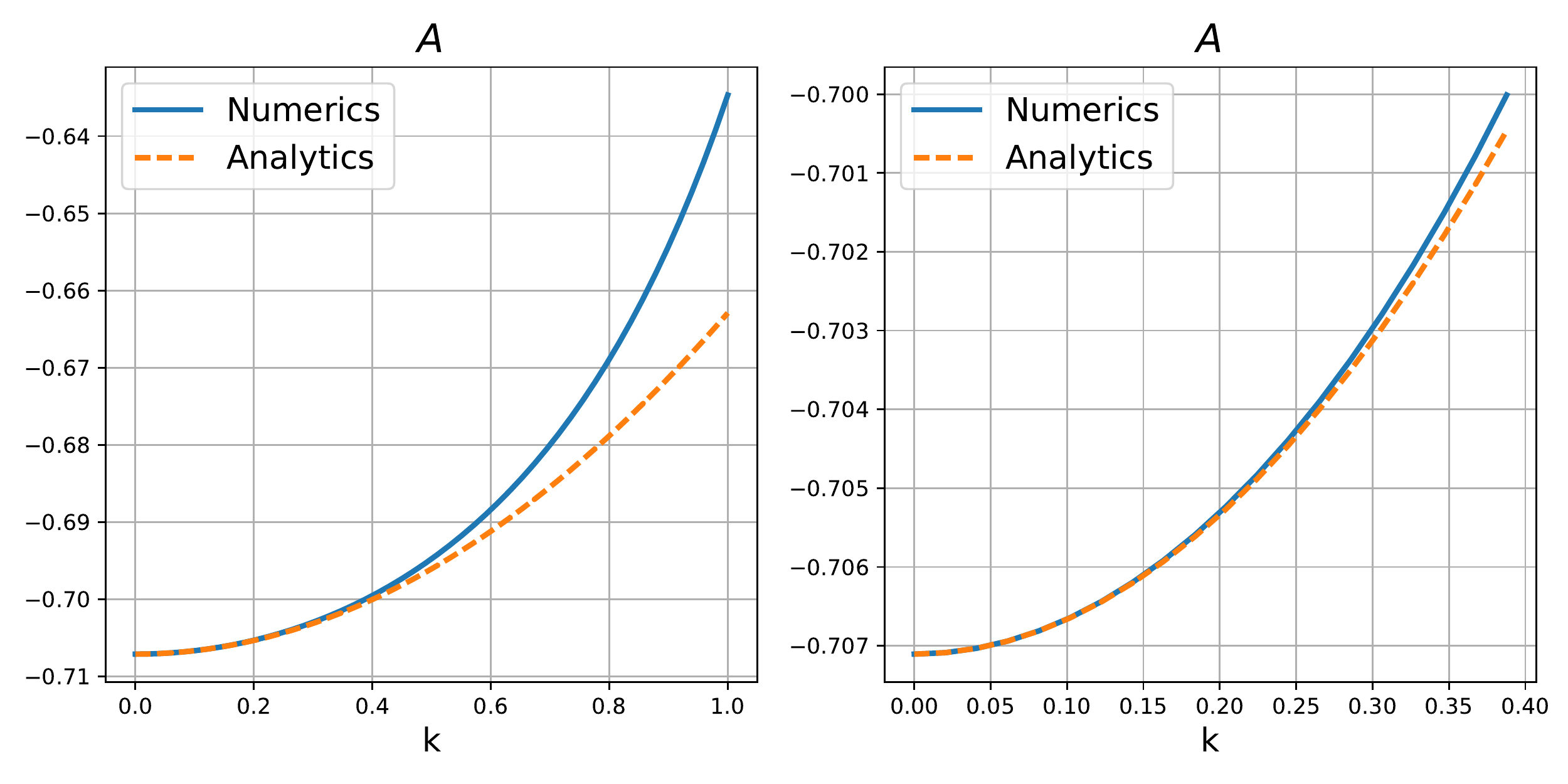}
		\caption{The scalar field strength as a function of $k$ with $m=1,\theta=-\pi/3$ and $\psi=5\pi/3$ so that $\mathring{A}=-1/\sqrt{2}$. The left plot shows $A$ for various values $k\in [0,1]$ and the right plot shows $A$ for $k\in[0,2/5]$. All values of $A$ are numerically calculated at $t=10^{-3}$.}
		\label{fig:aofk}
	\end{figure}
	Numerically, we calculate $A$ by simply taking the final values of $t$ and $\nu$ and multiplying them together. In \Figref{fig:aofk} we show the numerically calculated scalar field strength $A$ as a function of $k$ for $\theta=-\pi/3,\psi =5\pi/3$ and $m=1$ calculated at $t=10^{-3}$. The left plot of \Figref{fig:aofk} shows $A$ for $k\in[0,1]$ and the right for $k\in[0,2/5]$. It is easy to see that \Eqref{Eq:Af_Small_k} gives a reasonably accurate prediction of the scalar field strength when $k\le 2/5$, with accuracy decreasing significantly beyond this point.

	\begin{figure}[t]
		\centering
		\includegraphics[width=1.0\linewidth]{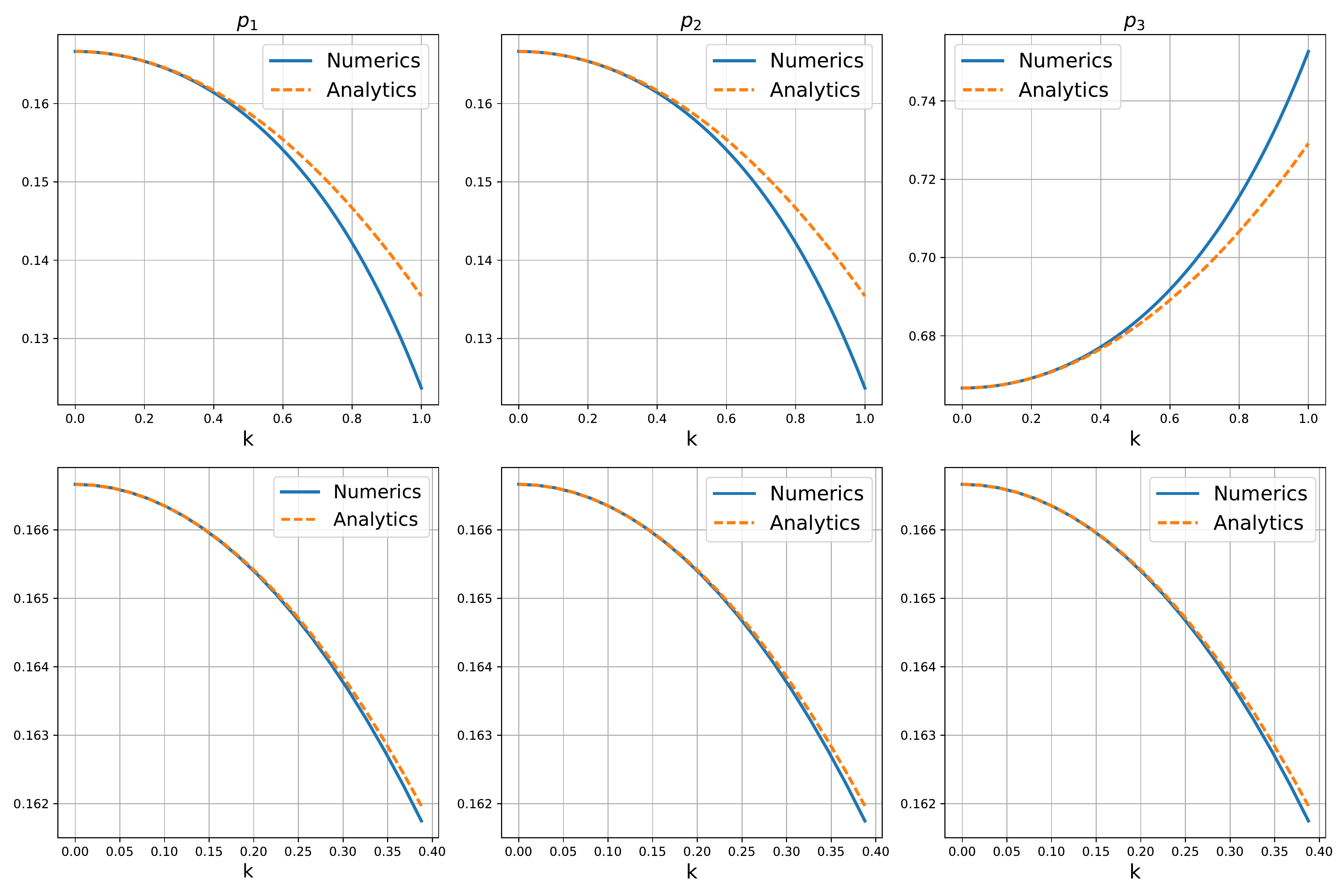}
		\caption{The Kasner exponents as a function of $k$ with $m=1,\theta=-\pi/3$ and $\psi=5\pi/3$ so that $\mathring{A}=-1/\sqrt{2}$. All of the Kasner exponents were numerically calculated at $t=10^{-3}$.}
		\label{fig:pofk}
	\end{figure}

	It is clear from the plots shown in \Figref{fig:kexpansiontests} that the solution (corresponding to $k=10^{-3}$) is asymptotically Kasner. We claim that this is generically true for sufficiently small $k$.

	In order to provide further evidence for this it is useful to now calculate the Kasner exponents. In accordance with \Defref{Def:AsymptoticallyKasner}, this is done by first determining the trace-free part of the second fundamental form ${\chi^a}_{b}$ and then calculating the limits of $t{\chi^a}_{b}$ as $t\rightarrow0^+$. We first calculate ${\chi^a}_b$. Owing to the homogeneity of this problem, we expect ${\chi^a}_b$ to take the form 
	\begin{align}
	{\chi^a}_b = u(t)\text{diag}\left( \frac{1}{3} - \mathring{p}_{1}, \frac{1}{3} - \mathring{p}_2, \frac{1}{3} - \mathring{p}_3 \right),
	\label{Eq:ExCurveAnsatz}
	\end{align}
	where $\mathring{p}_1,\mathring{p}_2,\mathring{p}_3$ are the Kasner exponents of the background solution $(\mathring{\gamma}_{ab},{\mathring \chi}^a{}_b)$ and $u(t)$ is an unknown function determined as a solution of the differential equation
	\begin{align}
	\partial_t u(t) = -\frac{u(t)}{t(1-3t^2 V(\phi(t)))},\quad u(1)=1,
	\label{Eq:u_Equation}
	\end{align}
	which follows from \Eqref{Eq:ADM_Homogenous_A} and \Eqref{Eq:ExCurveAnsatz}. In the special case $k=0$ we have that $V(\phi)=0$, in which case the exact solution of \Eqref{Eq:u_Equation} is $u(t)=1/t$. We therefore expect $u(t)$ (for small $k\neq 0$) to have the asymptotic form
	\begin{align}
	u(t)=\frac{1}{t} + u_{(1)}(t)k + u_{(2)}(t)k^2 + u(t)_{(3)}k^3 + O(k^4), \quad u_{(i)}(1)=0,\quad i=1,\dots,\infty.
	\label{Eq:u_expansion}
	\end{align} 
	We find that if $i$ is odd, then 
	\begin{align}
	\partial_t {u_{(i)}(t)}+\frac{u_{(i)}(t)}{t}=0,\quad u_{(i)}(1)=0.
	\end{align}
	It follows then that $u_{(i)}=0$ if $i$ is odd and as such we have that $u_{(2)}$ is the only non-zero function in the expansion \Eqref{Eq:u_expansion}. By inputting \Eqref{Eq:u_expansion} into \Eqref{Eq:u_Equation}, we find that $u_{(2)}$ must satisfy the differential equation
	\begin{align}
	\partial_t {u_{(2)}(t)}+\frac{u_{(2)}}{t}=-\frac{3}{4}\mathring{A}^2 m \ln(t)^{2},\quad u_{(2)}(1)=0,
	\end{align}
	and is therefore
	\begin{align}
	u_{(2)}=-\frac{3\mathring{A}^{2}m( t^2 - 1 + 2t^{2}( \ln(t)-1 )\ln(t) )}{8t}.
	\end{align}
	Having found an expression for the $u(t)$ (at least in leading order), we can now calculate the time-dependent Kasner exponent $p_{i}(t)$ as (recall from \Sectionref{Sec:Kasner_solutions_with_a_scalar_field} that $p_{i}(t)$ is the ith eigenvalue of $t {K^{a}}_{b}$),
	\begin{align}
	p_{i}(t)=\mathring{p}_{i} - \frac{(3\mathring{p}_i - 1)\mathring{A}^2}{8}\left( t^2- 1 + 2t^2 (\ln(t)-1)\ln(t) \right)mk^2 + O( k^4 ),
	\end{align} 
	which follows from discussions in \Sectionref{Sec:Kasner_solutions_with_a_scalar_field} and \Sectionref{Sec:Asymptotically_Kasner_solutions_in_Bianchi_I_cosmologies}. The final value of the Kasner exponent\footnote{The \emph{number} $p_i=\lim_{t\rightarrow 0^{+}}p_{i}(t)$ should not be mistaken for the \emph{function} $p_i(t)$.} $p_i$ can now be calculated by considering the limit of $p_{i}(t)$ as $t\rightarrow 0^+$:
	\begin{align}
	p_{i}= \lim_{t\rightarrow 0^+}p_{i}(t)=\mathring{p}_{i} + \frac{\mathring{A}^2}{8}\left( 3 \mathring{p}_{i} - 1 \right)mk^{2} + O( k^4 ).
	\label{Eq:pf_Small_k}
	\end{align}

	\begin{figure}
		\centering
		\includegraphics[width=1.0\linewidth]{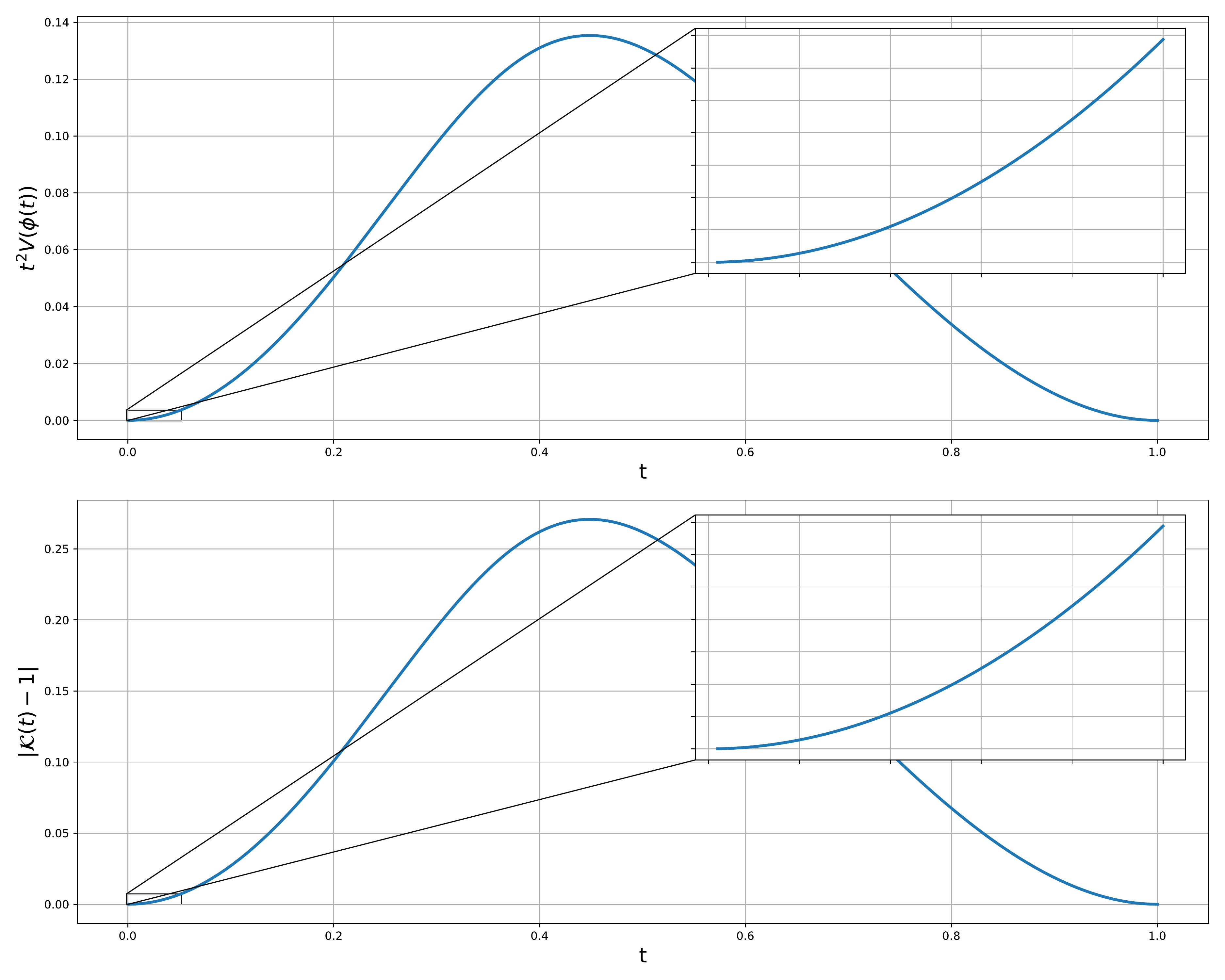}
		\caption{Numerical simulation of an asymptotically stationary (that is also asymptotically Kasner) with $m=10^2, \theta=-\pi/3,\psi=5\pi/3$ and where $k$ is given \Eqref{Eq:Stationary_k_example}. The top plot shows the Kasner condition and the bottom plot shows the violation of the Kasner constraint.}
		\label{fig:kexpansionstationary}
	\end{figure}

	In \Figref{fig:pofk} we show the numerically calculated Kasner exponents for various values of $k$. The left, centre and right columns in \Figref{fig:pofk} show $p_{1},p_{2}$ and $p_{3}$, respectively. The first row shows the Kasner exponents for $k\in[0,1]$ and the second row shows the Kasner exponents for $k\in[0,2/5]$. In \Figref{fig:pofk}, we see that \Eqref{Eq:pf_Small_k} provides a reasonably accurate estimate of the Kasner exponents for $k\in[0,2/5]$.
	
	Using \Eqref{Eq:Af_Small_k} and \Eqref{Eq:pf_Small_k} to numerically calculate the Kasner constraint $\K(0)$ at $t=0$ now gives
	\begin{align}
	\begin{split}
	\mathcal{K}(0)&=\left( \mathring p_{1}^{2} + \mathring p_{2}^{2} + \mathring p_{3}^{2} + \mathring A^{2} \right) + 3\left( ( \mathring A^2 + \mathring p_{1}^{2} + \mathring p_{2}^{2} + \mathring p_{3}^{2} ) - (\mathring  p_1 + \mathring p_2 +  \mathring p_3 ) \right)k^2 + O(k^4).
	\end{split}
	\end{align}
	Since $\mathring p_1, \mathring p_2, \mathring p_3$ and $\mathring A$ are known to satisfy the Kasner relations we immediately find that $\mathcal{K}(0)=1+O(k^4)$ and hence, at least for small $k$, these solutions are asymptotically Kasner.

	\subsubsection{An asymptotically stationary solution}
	\label{SubSec:An_asymptotically_stationary_solution}
	We now search for asymptotically stationary solutions. For this we require that $\nu\rightarrow 0$ as $t\rightarrow 0^+$. In order to find such a solution, we proceed as follows: For fixed $m,\theta$ and $\psi$ we use \Eqref{Eq:Af_Small_k} to obtain an estimate for $k=k_\star$ such that $A(k_\star)=0$. The initial guess is given by the formula 
	\begin{align}
	k_\star =\frac{2\sqrt{2}}{\sqrt{(2-3\mathring{A}{}^2)m}}.
	\label{Eq:AssStation_Guess}
	\end{align}
	We then numerically calculate the scalar field strength $A$ for various values of $k$ in a neighbourhood of $k_\star$. If $A(k)$ changes sign inside of our chosen interval then we apply a bisection method to find the root.  It is interesting to note that one does not expect to find solutions that are asymptotically stationary if both $k$ and $m$ are small. This follows both from \Eqref{Eq:AssStation_Guess} and from \Figref{fig:aofk}. Moreover, we find that one does not expect $k_\star$ to exist if the solution is isotopic (and hence $\mathring{A}^2=2/3$). For the case $m=10^2, \theta=-\pi/3$ and $\psi=5\pi/3$ \Eqref{Eq:AssStation_Guess} gives $k_\star = 2/5$. We numerically find that the scalar field strength $A(k)$ changes sign in the interval $k\in[1/10,7/10]$. Determining the root gives 
	\begin{align}
	k = 0.20346852151752604,\quad A=-2.67\times 10^{-15}.
	\label{Eq:Stationary_k_example}
	\end{align}
	In \Figref{fig:kexpansionstationary} we show the violation of the Kasner constraint as a function of time, where $k$ is given by \Eqref{Eq:Stationary_k_example}. In \Figref{fig:kexpansionstationary} we see that $|\K(t)-1|$ tends to zero as a function of time and so we conclude that this solution is asymptotically Kasner. For each of these simulations, we numerically calculated the scalar field strength $A$ at $t=10^{-3}$. It is interesting to note that in \Figref{fig:kexpansionstationary} we see that $t^2 V(\phi(t))\rightarrow 0$ in the limit $t\rightarrow 0^+$.

	\subsection[Expansions in $\mathring{A}$ and their numerical justifications]{Perturbation expansions in $\mathring{A}$ and their numerical justifications}
	\label{SubSec:Perturbation_expansions_in_A}
	If $mk^2 = O(1)$ then we do not expect the expansions derived in \Sectionref{SubSec:Perturbation_expansions_in_k} to hold as this would violate the ``smallness'' condition of $k$. In this section here we investigate the behaviour of solutions when neither $k$ nor $m$ is small. Instead we suppose that $\mathring{A}\ll 1$. In this case we expect the scalar field $\phi$ to have the asymptotic form
	\begin{align}
	\phi=\mathring{A}\phi_{(1)} + \mathring{A}^2 \phi_{(2)} + \mathring{A}^3 \phi_{(3)} + \mathring{A}^4 \phi_{(4)} +O(\mathring{A}^5).
	\label{Eq:PerturbationExpansion_A}
	\end{align}
	Note here that the quantity $\mathring{A}$ is a natural choice of perturbation parameter as it is bounded below $1$, with $|\mathring{A}|\le\sqrt{2/3}$. Inputting \Eqref{Eq:PerturbationExpansion_A} into \Eqref{Eq:ADM_Homogenous_phi} we find that if $i$ is an even number then the corresponding $\phi_{(i)}$ must satisfy the equation
	\begin{align}
	\partial_{t}^2 {\phi_{(i)}}+\frac{1}{t}\partial_{t} {\phi_{(i)}}+ mk^2 \phi_{(1)}=0,\quad \phi_{(1)}(1)=0, \quad \partial_{t} {\phi_{(1)}}(1)=0.
	\end{align}
	It follows then that, if $i$ is even, we have $\phi_{(i)}=0$ and in particular $\phi_{(2)}=\phi_{(4)}=0$. We therefore find that $\phi_{(1)}$ and $\phi_{(3)}$ are the only non-zero functions in \Eqref{Eq:PerturbationExpansion_A}. It now follows from \Eqref{Eq:ADM_Homogenous_phi} and \Eqref{Eq:Cosh_ID} that $\phi_{(1)}$ is a solution of the differential equation
	\begin{align}
	\partial_{t}^2 {\phi_{(1)}}+\frac{1}{t}\partial_{t} {\phi_{(1)}}+ mk^2 \phi_{(1)}=0,\quad \phi_{(1)}(1)=0, \quad \partial_{t} {\phi_{(1)}}(1)=1,
	\end{align} 
	and is therefore
	\begin{align}
	\phi_{(1)} = \frac{\pi}{2}\left( J_{0}(\sqrt{m}k) Y_{0}( \sqrt{m}k t ) - Y_{0}(\sqrt{m}k) J_{0}( \sqrt{m}k t ) \right),
	\label{Eq:BesselSolution}
	\end{align}
	where $J_{0}()$ and $Y_{0}()$ are Bessel functions of the first and second kind, respectively. Similarly, from \Eqref{Eq:ADM_Homogenous_phi}, we find that $\phi_{(3)}$ must solve the equation
	\begin{align}
	\partial^2_t{\phi_{(3)}}+\frac{1}{t}\partial_t{\phi_{(3)}}+ mk^2 \phi_{(3)}=f(t),\quad \phi_{(3)}(1)=0,\quad \partial_{t}\phi_{(3)}(1)=0,
	\end{align}
	with 
	\begin{align}
	f(t) = \frac{1}{6}k^2 m \left( 18\left( t\partial_{t}\phi_{(1)} \right)^2 + 9 t \left( 2\partial_{t}\phi_{(1)}+t\partial_{t}^2 \phi_{(1)} \right)\phi_{(1)} - k^2\left( 1 + 9 m t^2 \right)\phi_{(1)}^2 \right)\phi_{(1)}.
	\end{align}
	\begin{figure}
		\centering
		\includegraphics[width=1.0\linewidth]{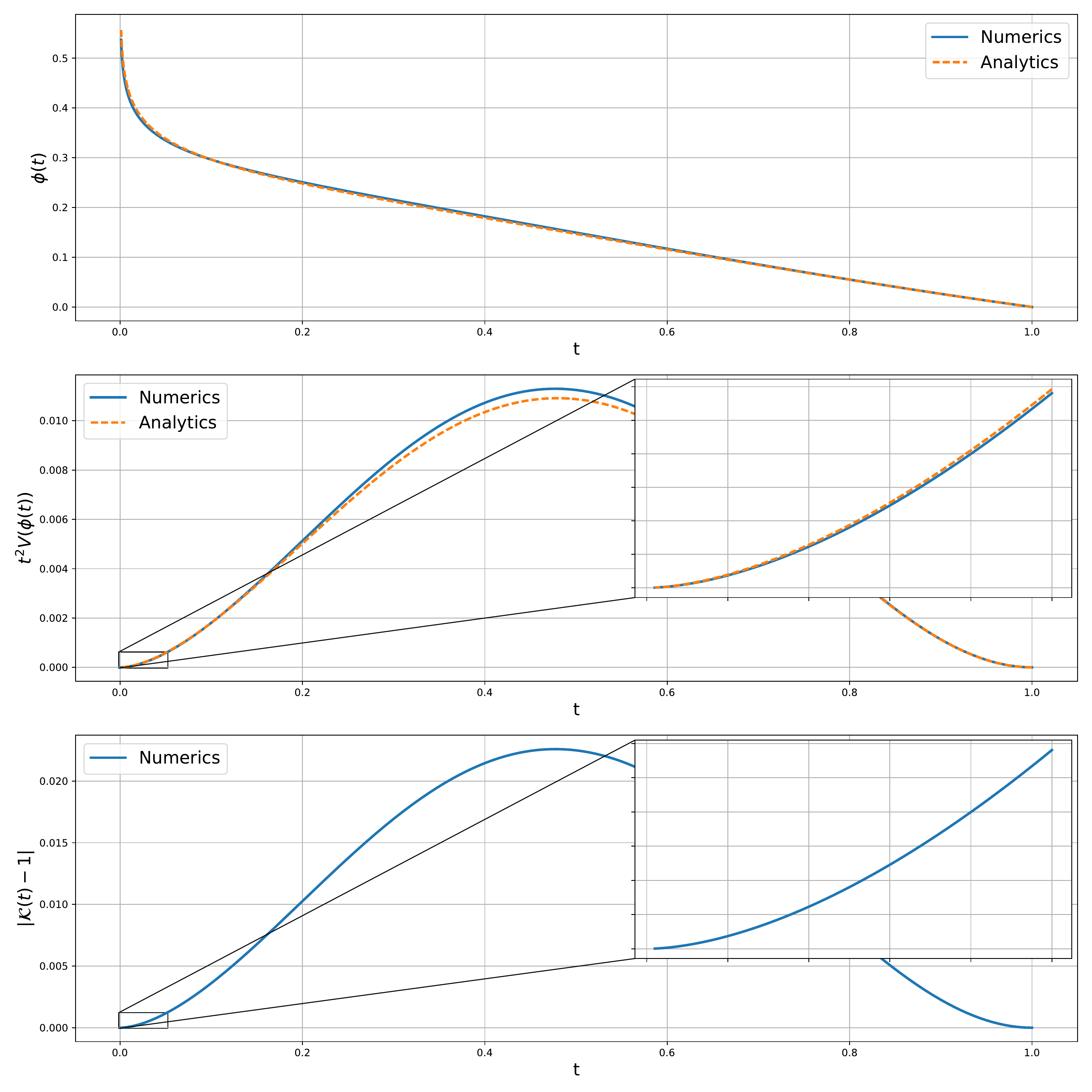}
		\caption{The solutions corresponding to the parameter choices $k=2,m=1,\psi=5\pi/3$ and $\theta=-\pi/10$. In the top, centre and bottom plots we show the scalar field $\phi$, the Kasner condition $t^2 V(\phi)$, and the violation of the Kasner constraint $|\K(t)-1|$, respectively.}
		\label{fig:oscillations02}
	\end{figure}
	\begin{figure}
		\centering
		\includegraphics[width=1.0\linewidth]{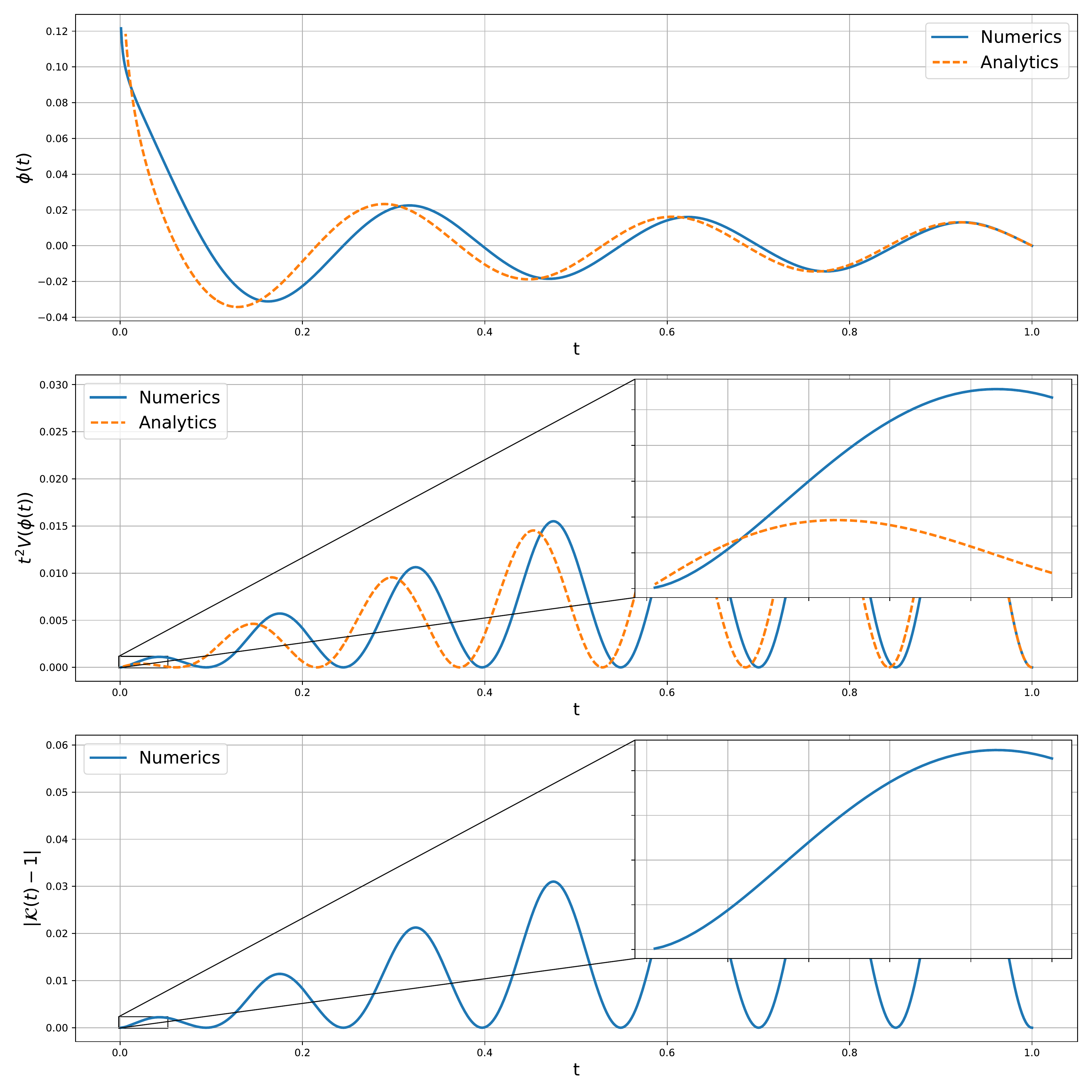}
		\caption{The solutions corresponding to the parameter choices $k=20,m=1,\psi=5\pi/3$ and $\theta=-\pi/10$. In the top, centre and bottom plots we show the scalar field $\phi$, the Kasner condition $t^2 V(\phi)$, and the violation of the Kasner constraint $|\K(t)-1|$, respectively.}
		\label{fig:oscillations20}
	\end{figure}
	The solution is therefore
	\begin{align}
	\phi_{(3)}=Y_{0}( \sqrt{m}k t )\int_{1}^{t}\tau{J_{0}(\sqrt{m}k\tau)f(\tau)}d\tau - J_{0}( \sqrt{m}k t )\int_{1}^{t}\tau{ Y_{0}(\sqrt{m}k\tau ) f(\tau)}d\tau
	\label{Eq:OSC_phi3}
	\end{align}
	We find that we are unable to explicitly integrate \Eqref{Eq:OSC_phi3}. Nevertheless, we can still calculate $\phi_{(3)}$ numerically. To numerically integrate \Eqref{Eq:OSC_phi3} we use the \emph{SciPy} integrator \emph{quad}\footnote{See \url{https://docs.scipy.org/doc/scipy/reference/generated/scipy.integrate.quad.html}.}. We note here that these solutions are only expected to hold only when $mk^2 = O({1})$ and $\mathring{A}\ll 1$. 
	
	We now provide numerical support for the expansions \Eqsref{Eq:PerturbationExpansion_A}--\eqref{Eq:OSC_phi3}. In \Figref{fig:oscillations02} we show the results of our numerical simulation corresponding to the parameter choices $k=2,m=1,\psi=5\pi/3$, and $\theta=-\pi/10$. In the top plot of \Figref{fig:oscillations02} we see that the numerically calculated scalar field $\phi$ closely matches our analytical solution. The middle and bottom plots in \Figref{fig:oscillations02} show the Kasner condition $t^2 V(\phi)$ and the violation of the Kasner constraint $|\K(t)-1|$, respectively. These plots demonstrate that the numerically calculated solution is asymptotically Kasner in the sense of \Defref{Def:AsymptoticallyKasner}. 
	
	In order to test the limitations of our analytical solution, we now consider the case defined by setting $k=20,m=1,\psi=5\pi/3$, and $\theta=-\pi/10$. The results of our numerical tests are shown in \Figref{fig:oscillations20}. The bottom two plots of \Figref{fig:oscillations20} demonstrate that the numerically calculated solutions are asymptotically Kasner,  which is consistent with our analytical predictions. However, our approximations oscillate with a slower frequency than the numerical solutions.

	\section{Conclusion}
	\label{Conclusions}
	In this work we have discussed the asymptotic behaviour of anisotropic space-times that	were constructed as solutions of the Einstein scalar field equations. One of the primary goals of this work was to establish whether or not it is possible to construct ``asymptotically Kasner solutions" with a non-zero potential. We found that the resulting solutions were asymptotically Kasner only if $t^{2}V(\phi(t))\rightarrow 0$ as $t\rightarrow 0^{+}$. Although previous works, such as \cite{PhysRevD.61.023508,KasnerSolutions}, have noted that this is a necessary condition, to the best of our knowledge \Thmref{Result:Kanser} is the first proof that demonstrates it is sufficient.

	For our analytical investigations, we restricted our attention to spatially homogeneous solutions with a strictly monotonic scalar field. This is the key to our proof and although it may seem restrictive at first, we emphasize that the scalar field $\phi$ only needs to be monotonic on some small interval near $t=0$ and hence this result covers all spatially homogeneous solutions. We found that there are three different types of asymptotically Kasner solutions. Namely solutions that are (1) strictly monotonic, (2) eventually monotonic, and (3) asymptotically stationary.

	As with \cite{RodnianskiSpeck:Linear,RodnianskiSpeck:NonLinear}, our analytical treatment relies of the use of CMC coordinates. We found that this gauge choice was not well suited to the investigation of isotropic space-times with a scalar field that is not strictly monotonic and a non-zero potential. In fact, such space-times necessarily lead to singular behaviour at a finite time. We claim that this is a coordinate singularity, however it is unclear if this is the case. Nevertheless we were able to support this claim by calculating three curvature invariants. We found that all three of these curvature invariants remained finite near the singularity.

	By specifying the potential $V(\phi(t))$ as a simple function of \emph{time} (instead of a simple function of $\phi$) we were able to find two new solutions of the Einstein scalar-field equations. On the one hand, we provided an asymptotically Kasner solution with an unbounded potential. On the other hand, we gave a solution that was \emph{not} asymptotically Kasner. For both of these solutions the potential has a simple dependence on time, but a complicated dependence on the scalar field. In future works it would be interesting to further investigate other properties of these solutions, such as stability.

	To extend our investigations we numerically studied the asymptotic behaviour of solutions corresponding to a two parameter \emph{cosh} potential (see \Eqref{Eq:CoshPotential}). This choice of potential allowed us to construct numerical examples of each of the three types of asymptotically Kasner space-times.

	We began by considering the spatially homogeneous setting. Using perturbation expansions, we demonstrated that the resulting solutions are always asymptotically Kasner. In the case when we have $|k\mathring{A}|-2<0$ we found that our perturbation expansions closely matched our numerical simulations. Conversely, we found that if $|k\mathring{A}|-2>0$ then the numerical and analytical results matched only for a short while before the numerical solution \emph{bounced} to a different asymptotically Kasner solution. In the case $|k\mathring{A}|-2=0$ we found that our numerical scheme was not good enough to determine whether or not the solution is asymptotically Kanser. In future works it would be interesting to see if it possible to remedy this. Finally, we investigated spatially homogenous space-times with an eventually monotonic scalar field. Here we found that if $mk^2=O(1)$ then the scalar field exhibited intermediary oscillatory behaviour before becoming eventually monotonic. Through the use of perturbation expansions we were able to show that the scalar field oscillated with frequency $k\sqrt{m}$ (provided $mk^2=O(1)$).

	Although we have restricted our attention to a minimally coupled scalar field, in future works it could be interesting to investigate how our results may change when one instead considers a non-minimally coupled scalar field.

\bibliographystyle{unsrt}
\bibliography{bibfile}
\end{document}